\newtheorem{lemma}{Lemma}
\newtheorem{proposition}{Proposition}
\newtheorem{corollary}{Corollary}
\newtheorem{property}{Property}
\newtheorem{remark}{Remark}
\newtheorem{claim}{Claim}
\begin{document}
\title{{Multi-Pair Two-Way Relay Network with Harvest-Then-Transmit Users: \\Resolving Pairwise Uplink-Downlink Coupling}}

\author{Shuai Wang, Minghua Xia,
        and Yik-Chung Wu
\thanks{
Copyright (c) 2016 IEEE. Personal use of this material is permitted. However, permission to use this material for any other purposes must be obtained from the IEEE by sending a request to pubs-permissions@ieee.org.

S. Wang and Y.-C. Wu are with the Department of Electrical and Electronic Engineering, The University of Hong Kong, Hong Kong (e-mail: \{swang, ycwu\}@eee.hku.hk).

M. Xia is with the School of Electronics and Information Technology, Sun Yat-sen University, Guangzhou, 510006, China (e-mail: xiamingh@mail.sysu.edu.cn).

This work was presented in part at IEEE International Conference on Accoustic, Speech and Signal Processing (ICASSP) [32], Shanghai, China, March 2016.
This work was supported by the National Natural Science Foundation of China (NSFC) under Grant No. 61671488, Special Fund for Science and Technology Development in Guangdong Province under Grant No. 2016A050503025, and Guangzhou Science and Technology Project under Grant No. 201604010073.
}}

\maketitle

\begin{abstract}
While two-way relaying is a promising way to enhance the spectral efficiency of wireless networks, the imbalance of relay-user distances may lead to excessive wireless power at the nearby-users.
To exploit the excessive power, the recently proposed harvest-then-transmit technique can be applied.
However, it is well-known that harvest-then-transmit introduces uplink-downlink coupling for a user.
Together with the co-dependent relationship between paired users and interference among multiple user pairs, wirelessly powered two-way relay network suffers from the unique pairwise uplink-downlink coupling, and the joint uplink-downlink network design is nontrivial.
To this end,
for the one pair users case,
we show that a global optimal solution can be obtained.
For the general case of multi-pair users,
based on the rank-constrained difference of convex program,
a convergence guaranteed iterative algorithm with an efficient initialization is proposed.
Furthermore, a lower bound to the performance of the optimal solution is derived by introducing virtual receivers at relay.
Numerical results on total transmit power show that the proposed algorithm achieves a transmit power value close to the lower bound.
\end{abstract}

\begin{IEEEkeywords}
Beamforming design, convex optimization, difference of convex program, multi-pair two-way relay network, rank relaxation, wireless power transfer.
\end{IEEEkeywords}

\IEEEpeerreviewmaketitle
\section{Introduction}

\subsection{Background and Motivation}
Wireless power transfer (WPT) is a groundbreaking technique which can prolong the lifetime of battery and feed power to a device when wired charging is inconvenient \cite{1,J2}.
However,
a critical challenge for WPT is the high propagation path-loss during energy transmission \cite{2}.
Fortunately, the beamforming gain in multiple-input-multiple-output (MIMO) systems \cite{J4} offers a viable option for mitigating such problem \cite{3}.
Therefore, WPT combined with MIMO systems have been a focus lately.
Examples include multi-user multiple-input-single-output channel \cite{5}, MIMO broadcast channel \cite{J1}, and distributed antenna network \cite{J3}.
In all these systems, the interference takes on a dual role: it is detrimental to information decoders, but friendly to energy harvesters.

In the context of WPT, the harvest-then-transmit protocol is recently proposed in \cite{8}, which enables energy constrained terminals to transmit data, and ignites the researches on wirelessly powered communication networks (WPCNs) \cite{8,11,12}.
In WPCNs, the access point supplies power through the downlink WPT, and the harvested energy at users supports subsequent uplink transmission.
This mechanism reveals its uplink-downlink coupled nature, which is apparent in existing works on WPCNs.
In particular in \cite{8}, the throughput versus time allocation of uplink-downlink transmission is analyzed in WPCN.
Meanwhile, the joint uplink-downlink beamforming design in multi-antenna WPCN is proposed in \cite{11}.

However, the current research on WPCNs is still at its infancy, focusing on basic point-to-point systems.
This paper will take a step further to discuss the wirelessly powered multi-pair two-way relay (TWR) network, which
is a natural generalization of many basic relay systems \cite{13,14}.
Wirelessly powered TWR has extensive applications in sensor networks, medical electronics, smart homes and wearable computations etc., where information exchange between energy-constrained devices is often required.
Nonetheless, wirelessly powered TWR is fundamentally different from traditional TWR \cite{13,14,17,15} or other WPT systems \cite{3,5,J1,J3,8,11} due to the \emph{pairwise uplink-downlink coupling}.
More specifically, in wirelessly powered TWR, a single transmit beamformer at the relay simultaneously controls a pair of users' downlink information and power transfer.
Further adding to the fact that the uplink signal-to-interference-plus-noise ratios (SINRs) of a pair of users are determined by a single receive beamformer, and possibly the power harvested in the previous downlink phase, wirelessly powered TWR suffers from the strong coupling of four transmission links for each pair of users.

Under such pairwise uplink-downlink coupling, an adjustment appearing beneficial to a user may \emph{backfire}.
For example, consider aligning the downlink transmit beamformer towards a user $U_1$ and away from its paired user $U_2$.
While this seems to benefit the power transfer to $U_1$, the subsequent uplink transmit power of $U_2$ may decrease due to its reduced harvested energy.
Then, to maintain the uplink data-rate of $U_2$, the receive beamformer needs to be aligned towards $U_2$ and away from $U_1$.
This will in turn deteriorate the uplink data-rate of $U_1$ and the transmit power of $U_1$ needs to be increased, which offsets the benefit brought by the increased harvested power at $U_1$.
Consequently, it is crucial to strike a balance where wirelessly powered TWR is the most efficient.

\subsection{Technical Challenges and Contributions}
While finding the balance can be cast as a beamforming design problem subject to data-rate quality-of-service (QoS) constraints, the pairwise uplink-downlink coupling makes this problem nonconvex over each beamformer.
Therefore, convex programming \cite{3,5,J1} or block coordinate descent method \cite{11,12} will no longer apply.
On the other hand, to resolve the nonconvexity, a common way is to reformulate the problem into a difference of convex (DC) program.
However, obtaining an exact DC reformulation is a usual challenge.
By analyzing the stationary points, this paper proposes a simplification procedure without approximation.
The simplified problem is equivalently reformulated as a DC program using rank relaxation, provided that there exists a rank guarantee of the solution.
Unfortunately, the considered problem involves a single beamformer serving a group of users, and currently there is no rank guarantee of the solution, even for its special case of TWR without energy harvesting \cite{13}.
To this end, we establish a rank-one guarantee for the one pair users case, and a rank-two guarantee for the general multi-pair users case.
Interestingly, the rank-two guarantee result also provides a general property for other beamforming design problems with single beamformer serving multiple users, such as those in \cite{13} and \cite{16}.

With the transformed DC program, it can be solved by an iterative algorithm with a feasible starting point.
Nonetheless, in the multi-pair users case, finding a feasible point is nontrivial when the number of antennas at relay is not sufficient (e.g., when smaller than the number of users).
While the conventional $l_1$-norm regularization is a popular initialization method, it involves multiple conic programs (CPs).
On the contrary, based on the structure of TWR, this paper proposes a CP-free initialization with much lower complexity.
Finally, a lower bound on the performance of the optimal solution is derived by introducing virtual receivers at relay, and an analytical solution is obtained for the case of massive antenna array at relay.
Numerical results show that the proposed iterative algorithm in multi-pair users case achieves close performance to the lower bound.

\subsection{Organization and Notations}
The rest of the paper is organized as follows. In Section II, the system model is described in detail for the uplink and downlink transmission.
In Section III, the beamforming design problem is formulated, and transformed to an equivalent problem for subsequent processing.
In Section IV, the global optimal solution for the one-pair users case is derived in part A, and the convergence guaranteed iterative algorithm is presented for the multi-pair users case in part B.
A lower bound to the performance of the optimal solution and the analytical solution for the case of massive antenna array at relay are derived in Section V.
Simulation results are presented in Section VI.
Finally, conclusions are drawn in Section VII.

\emph{Notation}.
Italic letters, simple bold letters, and capital bold letters represent scalars, vectors, and matrixes, respectively.
The operators $\textrm{Tr}(\cdot),(\cdot)^{T},(\cdot)^{H},\mathrm{Rank}(\cdot),(\cdot)^{-1}$ take the trace, transpose, Hermitian, rank, and inverse of a matrix, respectively,
while $\mathrm{vec}(\cdot)$ is the matrix vectorization operator.
Symbol $\mathbf{I}_{N}$ represents an $N\times N$ identity matrix.
Symbol $\mathbf{1}_{N}$ represents an $N\times 1$ vector with all the elements equal to $1$.
Finally, $\mathbb{E}(\cdot)$ represents the expectation of a random variable.

\section{System Model}

\begin{figure}[!t]
\centering
\includegraphics[width=75mm]{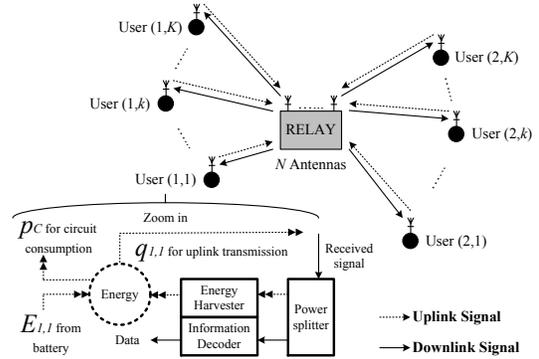}
\caption{System model of wirelessly powered multi-pair TWR network.}
\label{fig_sim}
\end{figure}

\setcounter{secnumdepth}{4}In this paper, we consider a multi-pair TWR network consisting of a relay station with $N$ antennas, and $2K$ single-antenna users.
As shown in Fig. 1, the $2K$ users are paired into $K$ groups, and users in each group intend to exchange information with each other through the relay.
Without loss of generality, users in the $k^{\mathrm{th}}$ group are denoted as user $(1,k)$ and $(2,k)$, where $k\in\mathcal{K}=\{1,2,...,K\}$.
Compute-and-forward network coding with lattice codes is applied in the system as it outperforms amplify-and-forward scheme \cite{17}, and has a lower complexity than decode-and-forward scheme \cite{15}.
The system operates in the half-duplex mode, and the transmission consists of two phases: uplink phase and downlink phase, each with time duration of $M$ symbols.
Below, we give the details of each transmission phase.

\subsection{Uplink Phase}

In uplink phase, all the users transmit data symbols to the relay simultaneously, with the $(i,k)^{\mathrm{th}}$ user symbol being $\mathbf{x}^T_{i,k}\in \mathbb{C}^{1\times M}$ with power $\frac{1}{M}\mathbb{E}(||\mathbf{x}_{i,k}||^2)=q_{i,k}$.
The $\mathbf{x}_{i,k}$ is generated from lattice codes \cite{17} and the details are given in Appendix A.
The received signal $\mathbf{Y}\in \mathbb{C}^{N\times M}$ at the relay is given by
\begin{eqnarray}\label{1}
\mathbf{Y}=\mathop{\sum}_{k=1}^{K}(\mathbf{h}_{1,k}\mathbf{x}^T_{1,k}+\mathbf{h}_{2,k}\mathbf{x}^T_{2,k})+\mathbf{N},
\end{eqnarray}
where $\mathbf{h}_{i,k}\in \mathbb{C}^{N\times 1}$ denotes the uplink channel vector, and $\mathbf{N}\in \mathbb{C}^{N\times M}$ is the Gaussian noise at relay with $\mathbb{E}[\mathrm{vec}(\mathbf{N})\mathrm{vec}(\mathbf{N})^H]=\sigma_r^2\mathbf{I}_{MN}$.
In order to separate the signal component of the $k^{\mathrm{th}}$ user pair, a receive beamforming vector $\mathbf{w}^H_k\in \mathbb{C}^{1\times N}$ with $||\mathbf{w}_k||=1$ is applied to $\mathbf{Y}$.
Since in network coding, $\mathbf{w}^H_k(\mathbf{h}_{1,k}\mathbf{x}^T_{1,k}+\mathbf{h}_{2,k}\mathbf{x}^T_{2,k})$ is considered to be the useful signal for the $k^{\mathrm{th}}$ user pair, and the remaining part is the interference, we can express the uplink SINR of the $(i,k)^{\mathrm{th}}$ user as
\begin{align}
&
\Gamma^{\mathrm{UL}}_{i,k}=\frac{q_{i,k}|\mathbf{w}^H_k\mathbf{h}_{i,k}|^2}{\sum_{j}\sum_{l\neq k}q_{j,l}|\mathbf{w}^H_k\mathbf{h}_{j,l}|^2+\sigma_r^2}.
\end{align}
Furthermore, applying the results from \cite[Theorem 3]{17}, the uplink achievable rate $R^{\mathrm{UL}}_{i,k}$ from the $(i,k)^{\mathrm{th}}$ user to the relay can be computed to be
\begin{align}\label{3}
&R^{\mathrm{UL}}_{i,k}=
\frac{1}{2}{\Big[\mathrm{log}\Big(\frac{q_{i,k}|\mathbf{w}^H_{k}\mathbf{h}_{i,k}|^2}{q_{1,k}|\mathbf{w}^H_{k}\mathbf{h}_{1,k}|^2+q_{2,k}|\mathbf{w}^H_{k}\mathbf{h}_{2,k}|^2}
+\Gamma^{\mathrm{UL}}_{i,k}\Big)\Big]}^+,
\end{align}
where $[x]^+=\mathrm{max}(x,0)$, and the factor $\frac{1}{2}$ is due to the fact that two transmission phases are involved for single symbol transmission.
\subsection{Downlink Phase with WPT}

In the downlink phase, by using $\mathbf{w}^H_k\mathbf{Y}$, the relay generates lattice symbols $\mathbf{s}^T_{k}\in\mathbb{C}^{1\times M}$ with power
$\frac{1}{M}\mathbb{E}(||\mathbf{s}_{k}||^2)=p_{k}$ (details documented in Appendix A).
Then the relay transmits $\mathbf{s}_{k}$ to the $k^{\mathrm{th}}$ user pair through the corresponding transmit beamforming vector $\mathbf{v}_k\in\mathbb{C}^{N\times 1}$ with $||\mathbf{v}_k||=1$.
Accordingly, the received signal $\mathbf{r}^T_{i,k}\in \mathbb{C}^{1\times M}$ at the user $(i,k)$ is
\begin{align}\label{5}
&\mathbf{r}^T_{i,k}
=\mathbf{g}^H_{i,k}\Big(\mathop{\sum}_{l=1}^K\mathbf{v}_{l}\mathbf{s}^T_{l}\Big)+\mathbf{n}^T_{i,k},
\end{align}
where $\mathbf{g}^H_{i,k}\in \mathbb{C}^{1\times N}$ is the downlink channel vector from the relay to the $(i,k)^{\mathrm{th}}$ user,
and $\mathbf{n}^T_{i,k}\in \mathbb{C}^{1\times M}$ is the Gaussian noise at the $(i,k)^{\mathrm{th}}$ user with $\mathbb{E}[\mathbf{n}_{i,k}\mathbf{n}_{i,k}^H]=\sigma_u^2\mathbf{I}_{M}$.
The received signal \eqref{5} at the $(i,k)^{\mathrm{th}}$ user in the downlink is further split into two branches, one for the information decoder and the other for the energy harvester.

At the information decoder side, the signal is given by
\begin{align}\label{BC}
&~\mathbf{r'}^T_{i,k}
=\sqrt{\beta_{i,k}}\mathbf{g}^H_{i,k}\mathbf{v}_k\mathbf{s}^T_{k}
+\sqrt{\beta_{i,k}}\mathbf{g}^H_{i,k}\Big(\mathop{\sum}_{l\neq k}\mathbf{v}_{l}\mathbf{s}^T_{l}\Big)
\nonumber\\
&~~~~~~~~~~
+\sqrt{\beta_{i,k}}\mathbf{n}^T_{i,k}+\mathbf{z}^T_{i,k},
\end{align}
where $\beta_{i,k}\in(0,1]$ is the splitting factor, and $\mathbf{z}^T_{i,k}\in\mathbb{C}^{1\times M}$ is Gaussian noise introduced by the power splitter, with $\mathbb{E}[\mathbf{z}_{i,k}\mathbf{z}_{i,k}^H]=\sigma_z^2\mathbf{I}_{M}$.
In \eqref{BC}, the first term is the desired network coded signal, the second term is the inter-pair interference, and the last two terms are Gaussian noises.
Based on this interpretation of \eqref{BC}, the downlink SINR for the $(i,k)^{\mathrm{th}}$ user is
\begin{align}\label{9}
&
\Gamma^{\mathrm{DL}}_{i,k}=\frac{\beta_{i,k}p_{k}|\mathbf{g}^H_{i,k}\mathbf{v}_k|^2}
{\beta_{i,k}\sum_{l\neq k}p_{l}|\mathbf{g}^H_{i,k}\mathbf{v}_{l}|^2+\beta_{i,k}\sigma_u^2+\sigma_z^2}.
\end{align}
Then the downlink achievable rate $R^{\mathrm{DL}}_{i,k}$ from the relay to the $(i,k)^{\mathrm{th}}$ user is
\begin{align}
&R^{\mathrm{DL}}_{i,k}=
\frac{1}{2}\mathrm{log}\Big(1+\Gamma^{\mathrm{DL}}_{i,k}\Big).
\end{align}

On the other hand, the average harvested power from the wireless signals at user $(i,k)$ can be expressed as
$\eta(1-\beta_{i,k})\mathbb{E}[||\mathbf{r}_{i,k}||^2]/M$, where $0<\eta<1$ is the power conversion efficiency.
Based on \eqref{5}, it can be further expressed as
$
\eta(1-\beta_{i,k})\Big(\sum_{l=1}^Kp_{l}|\mathbf{g}^H_{i,k}\mathbf{v}_{l}|^2+\sigma_u^2\Big)
$.

\section{Problem Formulation and Transformation}
In the considered network, the design variables that can be controlled are the relay transmit-receive beamformers $\{\mathbf{v}_k,\mathbf{w}_k\}$, relay power allocations $\{p_k\}$, users' transmit powers $\{q_{i,k}\}$, and users' power splitting ratios $\{\beta_{i,k}\}$.
Since a basic QoS requirement in a communication system is the guaranteed transmission rate \cite{5}, our aim is to provide reliable communication for all the users at their required data-rates.
In particular, assuming that the data-rate requirement from the $(i,k)^{\mathrm{th}}$ user is $\overline{R}_{i,k}>0$,
then the channel from the $(i,k)^{\mathrm{th}}$ user to the relay, and that from relay to its paired $(3-i,k)^{\mathrm{th}}$ user\footnote{
The index $3-i$ is used to represent the paired user of the $i^{\mathrm{th}}$ user due to $i\in\{1,2\}$.} should both have achievable rates larger than $\overline{R}_{i,k}$, i.e., $R^{\mathrm{UL}}_{i,k}\geq\overline{R}_{i,k}$ and $R^{\mathrm{DL}}_{3-i,k}\geq\overline{R}_{i,k}$.

On the other hand, since the uplink achievable rate $R^{\mathrm{UL}}_{i,k}$ in \eqref{3} depends on the user transmit power $q_{i,k}$, which is in turn harvested from the downlink wireless signal, we must have the following power constraint\footnote{
To incorporate the minimum power threshold $\epsilon$ for activating the energy harvesting circuit inside a user terminal \cite{R1},
we could add an additional constraint $\eta(1-\beta_{i,k})\Big(\sum_{l=1}^Kp_{l}|\mathbf{g}^H_{i,k}\mathbf{v}_{l}|^2+\sigma_u^2\Big)
\geq\epsilon$. Since this constraint can be reformulated into a convex form, it will not change the subsequent derivation and algorithm.
}
:
\begin{align}
&~
\eta(1-\beta_{i,k})\Big(\sum_{l=1}^Kp_{l}|\mathbf{g}^H_{i,k}\mathbf{v}_{l}|^2+\sigma_u^2\Big)+2E_{i,k}-2p_c
\geq q_{i,k},
\end{align}
where $E_{i,k}$ is the local power per symbol-time at the $(i,k)^{\mathrm{th}}$ user, and $p_c$ is the circuit power consumption per symbol-time (the coefficient $2$ is due to the two phases of transmission).

Having the QoS and power harvesting requirements satisfied, it is crucial to minimize the total transmit power at relay and users\footnote{The user transmit powers are minimized to save energy for the battery.} because saving power translates to cost reduction and environmental benefits.
As a result, by accounting for all the factors mentioned above, an optimization problem can be formulated as:
\begin{align}
&\mathcal{P}1:\mathop{\mathrm{min}}_{\substack{\{\mathbf{v}_k,\mathbf{w}_k,p_k,\\q_{i,k},\beta_{i,k}\}}}
~\sum_{k=1}^K p_{k}+\sum_{i=1}^2\sum_{k=1}^Kq_{i,k} \nonumber\\
&\mathrm{s.t.}~~\frac{q_{i,k}|\mathbf{w}^H_{k}\mathbf{h}_{i,k}|^2}
{\sum_{j=1}^2q_{j,k}|\mathbf{w}^H_{k}\mathbf{h}_{j,k}|^2}+
\frac{q_{i,k}|\mathbf{w}^H_k\mathbf{h}_{i,k}|^2}{\sum_{j=1}^2\sum_{l\neq k}q_{j,l}|\mathbf{w}^H_k\mathbf{h}_{j,l}|^2+\sigma_r^2}
\nonumber\\
&~~~~~~
\geq 2^{2\overline{R}_{i,k}},~~\forall i\in\{1,2\},k\in\mathcal{K}
\nonumber
\\
&~~~~~~1+
\frac{\beta_{i,k}p_{k}|\mathbf{g}^H_{i,k}\mathbf{v}_k|^2}
{\beta_{i,k}\mathop{\sum}_{l\neq k}p_{l}|\mathbf{g}^H_{i,k}\mathbf{v}_{l}|^2+\beta_{i,k}\sigma_u^2+\sigma_z^2}
\geq 2^{2\overline{R}_{3-i,k}},
\nonumber\\
&~~~~~~
\forall i\in\{1,2\},k\in\mathcal{K}
\nonumber \\
&~~~~~~\eta(1-\beta_{i,k})\Big(\sum_{l=1}^Kp_{l}|\mathbf{g}^H_{i,k}\mathbf{v}_{l}|^2+\sigma_u^2\Big)+2E_{i,k}-2p_c\geq q_{i,k},
\nonumber\\
&~~~~~~
\forall i\in\{1,2\},k\in\mathcal{K}
\nonumber\\
&~~~~~~q_{i,k}\geq0,~\beta_{i,k}\in (0,1],~~\forall i\in\{1,2\},k\in\mathcal{K}
\nonumber\\
&~~~~~~||\mathbf{v}_k||=1,~||\mathbf{w}_k||=1,~p_{k}\geq0,~~\forall k\in\mathcal{K},
\nonumber
\end{align}
where the first and second constraints are the uplink and downlink data-rate constraints, respectively.
This optimization problem will be solved at the relay. Then the transmit powers and power splitting ratios will be sent to users.
Unfortunately, problem $\mathcal{P}1$ is extremely difficult since even when there is no uplink phase and energy harvesting, the resultant multigroup multicast problem is NP-hard \cite{13,16}.

The first challenge of solving $\mathcal{P}1$ is the term
$q_{i,k}|\mathbf{w}^H_{k}\mathbf{h}_{i,k}|^2/(\sum_{j=1}^2q_{j,k}|\mathbf{w}^H_{k}\mathbf{h}_{j,k}|^2)$ in the first constraint,
which makes the left hand side nonlinear with respect to $q_{i,k}|\mathbf{w}^H_{k}\mathbf{h}_{i,k}|^2$.
One way to get around this is to consider high SINR regime and set the term to zero as in \cite[Section V-B]{15}.
However, this would yield a conservative solution.
In contrast, we can establish the following property.
\begin{property}
Any stationary point\footnote{Stationary points of a problem are points satisfying the Karush-Kuhn-Tucker condition \cite{30}, and the optimal solution must be a stationary point.} of $\mathcal{P}1$ satisfies
\begin{align}\label{13ar}
\frac{q_{i,k}|\mathbf{w}^H_{k}\mathbf{h}_{i,k}|^2}
{\sum_{j=1}^2q_{j,k}|\mathbf{w}^H_{k}\mathbf{h}_{j,k}|^2}=\frac{2^{2\overline{R}_{i,k}}}{2^{2\overline{R}_{1,k}}+2^{2\overline{R}_{2,k}}},~~\forall i,k.
\end{align}
\end{property}
\begin{proof}
See Appendix B.
\end{proof}
\noindent Putting \eqref{13ar} into $\mathcal{P}1$ and defining
$\alpha_{i,k}:=2^{2\overline{R}_{i,k}}-2^{2\overline{R}_{i,k}}/(2^{2\overline{R}_{1,k}}+2^{2\overline{R}_{2,k}})$, the first constraint of $\mathcal{P}1$ can be replaced by
\begin{align}\label{alpha}
&q_{i,k}|\mathbf{w}^H_k\mathbf{h}_{i,k}|^2\geq\alpha_{i,k}\Big(\sum_{j=1}^2\sum_{l\neq k}q_{j,l}|\mathbf{w}^H_k\mathbf{h}_{j,l}|^2+\sigma_r^2\Big),
\nonumber\\
&\forall i=1,2,~~k=1,...,K,
\end{align}
without changing the solution of $\mathcal{P}1$ \cite{30}.
Now, the constraint \eqref{alpha} is linear with respect to $q_{i,k}|\mathbf{w}^H_{k}\mathbf{h}_{i,k}|^2$.

The second challenge of solving $\mathcal{P}1$ is the nonlinearly coupled variables of $\mathbf{v}_k,\mathbf{w}_k,\beta_{i,k}$, $p_{k}$ and $q_{i,k}$ in the first three constraints.
While other WPT systems \cite{3,5,J1,11} also suffer from the coupled variables,
the problems in those systems are convex or ``hidden convex'' (appear nonconvex but can be reformulated as convex problems) over each variable.
On the contrary, due to the pairwise uplink-downlink coupling, the feasible set of $\mathcal{P}1$ is nonconvex over $\mathbf{v}_k$ and $\mathbf{w}_k$ even with other variables fixed.
Therefore, the convex programming \cite{3,5,J1} or the block coordinate decent algorithm \cite{11} is not applicable to $\mathcal{P}1$.
To this end, we introduce new variables:
\begin{align}
\xi_{i,k}=\frac{1}{q_{i,k}},~\mathbf{V}_k:=p_{k}\mathbf{v}_k\mathbf{v}^H_k\succeq 0,~\mathrm{Rank}(\mathbf{V}_k)=1,
\end{align}
and apply the linear matrix inequalities (LMIs) \cite{20} to reformulate the constraints.
Then, based on the procedure given in Appendix C, the problem $\mathcal{P}1$ is equivalent to the problem $\mathcal{P}2$,
where $\mu_{i,k},I_{j,l,k}$ are slack variables, $\mathbf{\Theta}_{i,k}=\mathbf{g}_{i,k}\mathbf{g}^{H}_{i,k}$, and $\theta_{i,k}=2^{2\overline{R}_{3-i,k}}-1$.
Problem $\mathcal{P}2$ is well-structured with all the constraints being convex except for the left part of \eqref{P4-a}\eqref{P4-mu}, and the rank constraints in \eqref{P4-rank}.
These remaining difficulties will be addressed in the following section.
\begin{subequations}
\begin{align}
&\mathcal{P}2:
\mathop{\mathrm{min}}_{\substack{\{\mathbf{V}_k,\mathbf{w}_k,\xi_{i,k},\beta_{i,k},\\\mu_{i,k},I_{j,l,k}\}}}
~\sum_{k=1}^K \mathrm{Tr}(\mathbf{V}_{k})+\sum_{i=1}^2\sum_{k=1}^K\frac{1}{\xi_{i,k}} \nonumber \\
&\mathrm{s.t.}~~
\alpha_{i,k}\Big(\sum_{j=1}^2\sum_{l\neq k}I_{j,l,k}+\sigma_r^2\Big)-\frac{|\mathbf{w}^H_k\mathbf{h}_{i,k}|^2}{\xi_{i,k}}\leq 0,~~\forall i,k \label{P4-a}
\\
&~~~~~\left[
\begin{array}{cccc}
I_{j,l,k} & \mathbf{w}^H_k\mathbf{h}_{j,l}
\\
\mathbf{h}^H_{j,l}\mathbf{w}_k & \xi_{j,l}
\end{array}
\right]
\succeq 0,~~\forall j\in\{1,2\},\forall k,l\in\mathcal{K},l\neq k\label{P4-a2}
\\
&~~~~~
\left[
\begin{array}{cccc}
\dfrac{\mathrm{Tr}(\mathbf{\Theta}_{i,k}\mathbf{V}_k)}{\theta_{i,k}}-\mathop{\sum}_{l\neq k}\mathrm{Tr}(\mathbf{\Theta}_{i,k}\mathbf{V}_{l})
-\sigma_u^2 & \sigma_z
\\
\sigma_z & \beta_{i,k}
\end{array}
\right]
\nonumber\\
&~~~~~
\succeq 0,~~\forall i,k\label{P4-b}
\\
&~~~~~
\left[
\begin{array}{cccc}
\sum_{l=1}^K\mathrm{Tr}(\mathbf{\Theta}_{i,k}\mathbf{V}_{l})+\sigma_u^2 & \mu_{i,k}
\\
\mu_{i,k} & \eta(1-\beta_{i,k})
\end{array}
\right]
\succeq 0,
\nonumber\\
&~~~~~~\forall i,k\label{P4-c}
\\
&~~~~~~\frac{1}{\xi_{i,k}}-\mu^2_{i,k}+2p_c-2E_{i,k}\leq 0,~~\forall i,k,\label{P4-mu}
\\
&~~~~~~
\mathbf{V}_k\succeq 0,~\mathrm{Rank}(\mathbf{V}_k)=1,~||\mathbf{w}_k||\leq1,~~\forall k. \label{P4-rank}
\end{align}
\end{subequations}

\section{Joint Uplink-Dowlink Network Design}
In this section, the global optimal solution is first derived for the special case $K=1$, and then we step further to the general case $K\geq 2$.
\subsection{One Pair Case: Optimal Solution}
When $K=1$, the network reduces to a three-node TWR.
Since there is no $l\neq k$, the term $I_{j,l,k}$ can be dropped, and \eqref{P4-a} becomes
\begin{align}\label{k=1}
\alpha_{i}\sigma_r^2\xi_{i}-|\mathbf{w}^H\mathbf{h}_{i}|^2\leq 0,~~\forall i\in\{1,2\},
\end{align}
where the subscript $k$ is also dropped to simplify the notation in this subsection.
With a similar proof to \textbf{Property 1},
it can be shown that any stationary point of $\mathcal{P}2$ with $K=1$ must activate \eqref{k=1} and therefore
the problem $\mathcal{P}2$ with $K=1$ is equivalent to
\begin{subequations}
\begin{align}
&\mathcal{P}3:~\mathop{\mathrm{min}}_{\substack{\mathbf{V},\mathbf{w},\{\beta_{i},\mu_i\}}}
~ \mathrm{Tr}(\mathbf{V})+\sum_{i=1}^2\frac{\alpha_{i}\sigma_r^2}{|\mathbf{w}^H\mathbf{h}_{i}|^2} \nonumber \\
&\mathrm{s.t.}~~
\left[
\begin{array}{cccc}
\dfrac{\mathrm{Tr}(\mathbf{\Theta}_{i}\mathbf{V})}{\theta_{i}}
-\sigma_u^2 & \sigma_z
\\
\sigma_z & \beta_{i}
\end{array}
\right]
\succeq 0,~~\forall i
\label{26b}
\\
&~~~~~~
\left[
\begin{array}{cccc}
\mathrm{Tr}(\mathbf{\Theta}_{i}\mathbf{V})+\sigma_u^2 & \mu_i
\\
\mu_i & \eta(1-\beta_{i})
\end{array}
\right]
\succeq 0,~~\forall i
\label{26bb}
\\
&~~~~~~~
\mu_i\geq\sqrt{\Big[\frac{\alpha_{i}\sigma_r^2}{|\mathbf{w}^H\mathbf{h}_{i}|^2}+2p_c-2E_{i}\Big]^+},~~\forall i\label{26c}
\\
&~~~~~~~
\mathbf{V}\succeq 0,~\mathrm{Rank}(\mathbf{V})=1,~||\mathbf{w}||\leq1.
\label{P5-rank}
\end{align}
\end{subequations}
Now to reduce the problem dimension of $\mathcal{P}3$, the following property can be established.
\begin{property}
Denoting $\mathbf{w}^*$ and $\mathbf{V}^*$ as the optimal solution of $\mathbf{w}$ and $\mathbf{V}$ in $\mathcal{P}3$,
we have $\mathbf{w}^*\in\mathrm{span}\{\mathbf{h}_1,\mathbf{h}_2\}$ and the eigenvector of $\mathbf{V}^*$ in $\mathrm{span}\{\mathbf{g}_1,\mathbf{g}_2\}$.
\end{property}
\begin{proof}
See Appendix D.
\end{proof}
Based on \textbf{Property 2} and \cite[Lemma 5-6]{24},
the receive beamformer $\mathbf{w}$ can be parameterized by $0\leq\gamma\leq1$ as
\begin{align}\label{w}
&\mathbf{w}(\gamma)=\sqrt{\gamma}\cdot\frac{\mathbf{h}_{1}}{||\mathbf{h}_{1}||}+\sqrt{1-\gamma}\cdot
\mathrm{e}^{\mathrm{j}\angle(\mathbf{h}^H_2\mathbf{h}_1)}\cdot\frac{\mathbf{h}_b}{||\mathbf{h}_b||}
,
\end{align}
where $\mathrm{j}=\sqrt{-1}$ and $\mathbf{h}_b=
\mathbf{h}_2-
\mathbf{h}^H_1\mathbf{h}_2/||\mathbf{h}_1||^2\cdot\mathbf{h}_1$.
On the other hand, following the procedure in \cite[Section III]{R4},
$\mathbf{V}$ can be expressed as
\begin{align}\label{V}
\mathbf{V}=\mathbf{G}\mathbf{A}\mathbf{G}^H,~\mathrm{Rank}(\mathbf{A})=1,
\end{align}
with $\mathbf{A}\in\mathbb{C}^{2\times2}\succeq 0$.
Putting \eqref{w} and \eqref{V} into $\mathcal{P}3$, we have the following equivalent problem:
\begin{subequations}
\begin{align}
&\mathcal{P}4:~\mathop{\mathrm{min}}_{\substack{\mathbf{A}, \gamma, \{\beta_{i},\mu_i\}}}
~\mathrm{Tr}(\mathbf{G}^H\mathbf{G}\mathbf{A})
+\sum_{i=1}^2\frac{\alpha_{i}\sigma_r^2}{|\mathbf{w}(\gamma)^H\mathbf{h}_{i}|^2}
 \nonumber \\
&\mathrm{s.t.}~~\left[
\begin{array}{cccc}
\dfrac{\mathrm{Tr}(\mathbf{C}_{i}\mathbf{A})}{\theta_{i}}-\sigma_u^2 & \sigma_z
\\
\sigma_z & \beta_{i}
\end{array}
\right]
\succeq 0,~~\forall i\label{P3.2b}
 \\
&~~~~~~\left[
\begin{array}{cccc}
\mathrm{Tr}(\mathbf{C}_{i}\mathbf{A})+\sigma_u^2 & \mu_i
\\
\mu_i & \eta(1-\beta_{i})
\end{array}
\right]
\succeq 0,~~\forall i \label{P3.2c}
\\
&~~~~~~~
\mu_i\geq\sqrt{\Big[\frac{\alpha_{i}\sigma_r^2}{|\mathbf{w}(\gamma)^H\mathbf{h}_{i}|^2}+2p_c-2E_{i}\Big]^+},~~\forall i
\\
&~~~~~~~
\mathbf{A}\succeq 0,~\mathrm{Rank}(\mathbf{A})=1\label{P3.2f},
\end{align}
\end{subequations}
where $\mathbf{C}_{i}=\mathbf{G}^H\mathbf{g}_{i}\mathbf{g}^H_{i}\mathbf{G}$.
In $\mathcal{P}4$, the remaining nonconvex parts are the terms containing $\gamma$ and the rank constraint in \eqref{P3.2f}.
While the nonconvexity of $\gamma$ can be resolved by 1-D search, the rank constraint can be dropped by applying the semidefinite relaxation (SDR) \cite{21}, and the following proposition shows that the relaxation does not affect the optimality.
\begin{proposition}
The optimal rank-one solution $\mathbf{A}^*$ to the SDR problem of $\mathcal{P}4$ always exists.
\end{proposition}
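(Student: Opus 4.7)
The plan is to show that the SDR of $\mathcal{P}4$ (obtained by dropping $\mathrm{Rank}(\mathbf{A})=1$ in \eqref{P3.2f}) always admits an optimal solution with $\mathbf{A}$ of rank at most one. Since the $\gamma$-dependent objective term and the $\gamma$-dependent constraint do not involve $\mathbf{A}$, I would first fix $\gamma$ and any SDR-optimal $(\beta_i^{\star},\mu_i^{\star})$; what remains is a convex SDP in $\mathbf{A}$ alone. A Schur-complement reduction of \eqref{P3.2b} and \eqref{P3.2c} shows that these LMIs touch $\mathbf{A}$ only through the two scalars $\mathrm{Tr}(\mathbf{C}_i\mathbf{A})$, so the reduced SDP consists of the two linear inequalities $\mathrm{Tr}(\mathbf{C}_i\mathbf{A})\ge c_i(\beta_i^{\star},\mu_i^{\star})$ for $i=1,2$, together with $\mathbf{A}\succeq 0$, under the linear objective $\mathrm{Tr}(\mathbf{G}^H\mathbf{G}\mathbf{A})$.

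I would then apply a standard rank-reduction argument. The space of $2\times 2$ Hermitian matrices has real dimension four, while the two linear functionals $\mathrm{Tr}(\mathbf{C}_1\,\cdot\,)$ and $\mathrm{Tr}(\mathbf{C}_2\,\cdot\,)$ carve out a kernel of real dimension at least two. Suppose an SDR-optimal $\mathbf{A}^{\star}$ has rank two, so that $\mathbf{A}^{\star}\succ 0$, and pick any nonzero Hermitian $\mathbf{D}$ from this kernel. Strict positive definiteness of $\mathbf{A}^{\star}$ guarantees that $\mathbf{A}^{\star}\pm s\mathbf{D}\succeq 0$ for all sufficiently small $s>0$, so feasibility is preserved along both directions; the optimality of $\mathbf{A}^{\star}$ therefore forces $\mathrm{Tr}(\mathbf{G}^H\mathbf{G}\mathbf{D})=0$. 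Consequently $\mathbf{A}(s):=\mathbf{A}^{\star}+s\mathbf{D}$ is feasible with unchanged objective on the entire interval of $s\in\mathbb{R}$ along which it remains PSD. Pushing $s$ toward whichever end of that interval first forces an eigenvalue to zero yields a rank-one PSD matrix that attains the same optimal value, which is the desired rank-one optimal $\mathbf{A}^{\star}$.

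The main technical point to guard against is the possibility that the perturbation collapses $\mathbf{A}(s)$ all the way to the zero matrix (rank zero) rather than to a rank-one boundary point. This is ruled out by \eqref{P3.2b}: with $\theta_i,\sigma_u^2,\sigma_z^2,\beta_i^{\star}>0$ the reduced right-hand side $c_i(\beta_i^{\star},\mu_i^{\star})$ is strictly positive, so $\mathrm{Tr}(\mathbf{C}_i\mathbf{A}(s))=\mathrm{Tr}(\mathbf{C}_i\mathbf{A}^{\star})>0$ for every $s$, which excludes $\mathbf{A}(s)=\mathbf{0}$ and guarantees that the rank drops by exactly one at the boundary. A secondary subtlety is that the construction must be independent of the particular value of $\gamma$ used in the $1$-D search, but this is automatic since $\mathbf{A}$ does not appear in any $\gamma$-dependent term of $\mathcal{P}4$, so the argument goes through uniformly for every $\gamma\in[0,1]$.
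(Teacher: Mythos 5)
Your argument is correct. Note, however, that the paper itself gives no inline proof of this proposition: it simply cites Proposition~1 of the conference version [R4], so there is nothing in this manuscript to compare against line by line. Your route is the standard purification (rank-reduction) argument, and it is the natural one here: after fixing $\gamma$ and an SDR-optimal $(\beta_i^{\star},\mu_i^{\star})$, the Schur complements of \eqref{P3.2b} and \eqref{P3.2c} do indeed collapse the $\mathbf{A}$-dependence to the two real linear functionals $\mathrm{Tr}(\mathbf{C}_i\mathbf{A})$, and perturbing a rank-two (hence positive definite) optimum along a nonzero Hermitian direction in the two-dimensional kernel of those functionals, after deducing $\mathrm{Tr}(\mathbf{G}^H\mathbf{G}\mathbf{D})=0$ from optimality, walks to a singular boundary point at the same objective value. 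Your guard against collapsing to the zero matrix is also the right one: \eqref{P3.2b} forces $\mathrm{Tr}(\mathbf{C}_i\mathbf{A})\ge\theta_i(\sigma_u^2+\sigma_z^2/\beta_i^{\star})>0$, which rules out $\mathbf{A}(s)=\mathbf{0}$ and hence guarantees the boundary point has rank exactly one. This is equivalent to invoking the complex-SDP rank bound $\mathrm{Rank}(\mathbf{A}^{\star})^2\le m$ with $m=2$ constraints, or the rank-reduction procedure of reference [23] that the paper mentions separately. Two minor points you could make explicit: attainment of the SDR optimum (immediate when $\mathbf{g}_1,\mathbf{g}_2$ are linearly independent, so that $\mathbf{G}^H\mathbf{G}\succ 0$ and the linear objective is coercive on the PSD cone), and the degenerate case $\beta_i^{\star}=1$, in which \eqref{P3.2c} forces $\mu_i^{\star}=0$ and contributes only a vacuous lower bound on $\mathrm{Tr}(\mathbf{C}_i\mathbf{A})$ --- neither affects the conclusion.
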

\begin{proof}
See \cite[Proposition 1]{R4}.
\end{proof}
\textbf{Proposition 1} guarantees that an optimal rank-one solution to the SDR problem of $\mathcal{P}4$ exists.
However, there may be an alternative solution with a higher rank.
If we obtain such a solution, the rank reduction procedure in \cite{23} can be applied to obtain the rank-one solution.
Therefore, the problem $\mathcal{P}4$ is equivalent to its SDR problem, which can be optimally solved by an iterative procedure with 1-D search of $\gamma$ over $[0,1]$, and semidefinite programming (SDP) for each fixed $\gamma$.
Denoting the optimal solution to $\mathcal{P}4$ as $\mathbf{A}^*,\gamma^*,\{\beta^*_i,\mu^*_i\}$, then the optimal solution to $\mathcal{P}3$ is
$\mathbf{V}^*=\mathbf{G}\mathbf{A}^*\mathbf{G}^H$ and $\mathbf{w}^*=\mathbf{w}(\gamma^*)$.
Therefore, the optimal $\mathbf{v}^*,\mathbf{w}^*,p^*,\{q^*_i,\beta^*_i\}$ of $\mathcal{P}1$ with $K=1$ can be recovered accordingly.

In terms of computation complexity of the proposed algorithm,
calculating $\mathbf{C}_{i}=\mathbf{G}^H\mathbf{g}_{i}\mathbf{g}^H_{i}\mathbf{G}$ requires complexity $O(2N^2)$ \cite{20}.
On the other hand, solving the SDR problem of $\mathcal{P}4$ is dominated by the $2$-dimensional SDP cone in \eqref{P3.2f}, which requires complexity $O(2^{3.5}t)$ \cite{20}, with $t$ being the number of iterations for one-dimensional search.
As a result, the total computational complexity would be $O(2N^2+2^{3.5}t)$.
Notice that the problem $\mathcal{P}1$ with $K=1$ has also been discussed in \cite{12}, where a block coordinate descent method with each iteration solving an SDP problem was proposed.
However, such a method only converges to a suboptimal solution for the nonconvex problem $\mathcal{P}1$, in contrast to the global optimal solution derived in this paper.
Moreover, since the problem dimension in \cite{12} is $N$, the complexity of the method in \cite{12} is $O(2sN^{3.5})$, with $s$ being the number of iterations,
and is larger than that of the proposed method in this paper.

\subsection{General Case of $K\geq 2$}
For $K\geq 2$, a critical issue of $\mathcal{P}2$ is the rank constraints in \eqref{P4-rank}.
In the following, by analyzing the Lagrangian and the Karush-Kuhn-Tucker (KKT) condition,
a proposition is established for the rank-relaxed problem of $\mathcal{P}2$.
\begin{proposition}
Any stationary point $\mathbf{V}^\diamond_k$ of the rank-relaxed problem of $\mathcal{P}2$ satisfies $\mathrm{Rank}(\mathbf{V}^\diamond_k)\leq 2$.
\end{proposition}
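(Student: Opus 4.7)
The plan is to exploit the KKT conditions of the rank-relaxed $\mathcal{P}2$ to expose the structure of the PSD multiplier attached to $\mathbf{V}_k\succeq 0$, and then combine this with complementary slackness to bound $\mathrm{Rank}(\mathbf{V}_k^\diamond)$. I would associate an $N\times N$ PSD multiplier $\mathbf{Z}_k\succeq 0$ with the constraint $\mathbf{V}_k\succeq 0$, and two $2\times 2$ PSD multipliers $\mathbf{M}_{i,k}$ and $\mathbf{N}_{i,k}$ with the LMIs \eqref{P4-b} and \eqref{P4-c}, respectively. No other constraint of $\mathcal{P}2$ depends on $\mathbf{V}_k$, so no other multiplier enters the $\mathbf{V}_k$-stationarity equation.

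Since each LMI depends on $\mathbf{V}_k$ only through the linear traces $\mathrm{Tr}(\mathbf{\Theta}_{i,l}\mathbf{V}_k)=\mathbf{g}_{i,l}^H\mathbf{V}_k\mathbf{g}_{i,l}$ (as the own-pair signal in \eqref{P4-b} for $l=k$, as cross-pair interference in \eqref{P4-b} for $l\neq k$, and as a harvested-power contribution in \eqref{P4-c} for every $l$), setting $\partial L/\partial\mathbf{V}_k=\mathbf{0}$ yields
\[
\mathbf{Z}_k=\underbrace{\mathbf{I}+\sum_{i=1}^{2}\sum_{l\neq k}\bigl([\mathbf{M}_{i,l}]_{11}-[\mathbf{N}_{i,l}]_{11}\bigr)\mathbf{\Theta}_{i,l}}_{\mathbf{P}_k}\;-\;\underbrace{\sum_{i=1}^{2}\Bigl(\tfrac{[\mathbf{M}_{i,k}]_{11}}{\theta_{i,k}}+[\mathbf{N}_{i,k}]_{11}\Bigr)\mathbf{\Theta}_{i,k}}_{\mathbf{Q}_k}.
\]
Because $\mathbf{Q}_k$ is a nonnegative combination of only the two rank-one matrices $\mathbf{\Theta}_{1,k}$ and $\mathbf{\Theta}_{2,k}$, it is PSD with $\mathrm{Rank}(\mathbf{Q}_k)\leq 2$. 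Complementary slackness $\mathbf{Z}_k\mathbf{V}_k^\diamond=\mathbf{0}$ then rewrites as $\mathbf{P}_k\mathbf{V}_k^\diamond=\mathbf{Q}_k\mathbf{V}_k^\diamond$, so the left-hand side has rank at most $2$. If $\mathbf{P}_k$ acts injectively on $\mathrm{range}(\mathbf{V}_k^\diamond)$, then $\mathrm{Rank}(\mathbf{V}_k^\diamond)=\mathrm{Rank}(\mathbf{P}_k\mathbf{V}_k^\diamond)\leq 2$, which is the claim.

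The main obstacle is therefore the injectivity of $\mathbf{P}_k$ on $\mathrm{range}(\mathbf{V}_k^\diamond)$. Positive-semidefiniteness $\mathbf{P}_k=\mathbf{Z}_k+\mathbf{Q}_k\succeq 0$ is immediate as a sum of two PSDs, but the cross-pair coefficients $[\mathbf{M}_{i,l}]_{11}-[\mathbf{N}_{i,l}]_{11}$ may have either sign, so $\mathbf{P}_k\succ 0$ does not follow from the LMI multipliers alone. I would close this gap by a perturbation-and-contradiction argument in the spirit of \cite[Proposition 1]{R4}: suppose some unit $v\in\mathrm{range}(\mathbf{V}_k^\diamond)$ satisfies $\mathbf{P}_kv=\mathbf{0}$; then $\mathbf{Q}_kv=\mathbf{0}$ as well, forcing $\mathbf{g}_{i,k}^H v=0$ for $i=1,2$. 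The perturbation $\mathbf{V}_k^\diamond\leftarrow\mathbf{V}_k^\diamond-tvv^H$ with small $t>0$ then leaves every own-pair signal $\mathrm{Tr}(\mathbf{\Theta}_{i,k}\mathbf{V}_k^\diamond)$ invariant, can only reduce the cross-pair interference terms $\mathrm{Tr}(\mathbf{\Theta}_{i,l}\mathbf{V}_k^\diamond)$ for $l\neq k$ (thereby slackening \eqref{P4-b}), and strictly lowers the objective by $t||v||^2$. The only constraints that tighten are the harvesting LMIs \eqref{P4-c}, and I would use complementary slackness on the corresponding $\mathbf{N}_{i,l}$ together with a compensating adjustment of $\beta_{i,l}$ (or $\mu_{i,l}$) to show that the perturbed point remains feasible, contradicting the stationarity of $\mathbf{V}_k^\diamond$. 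Hence $\mathbf{P}_k$ is injective on $\mathrm{range}(\mathbf{V}_k^\diamond)$, and $\mathrm{Rank}(\mathbf{V}_k^\diamond)\leq 2$ follows.
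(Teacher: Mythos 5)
Your KKT setup is sound and structurally identical to the paper's: the multiplier of $\mathbf{V}_k\succeq 0$ decomposes as $\mathbf{Z}_k=\mathbf{P}_k-\mathbf{Q}_k$ with $\mathbf{Q}_k$ a nonnegative combination of the two rank-one matrices $\mathbf{\Theta}_{1,k},\mathbf{\Theta}_{2,k}$, and once $\mathbf{P}_k$ is injective on $\mathrm{range}(\mathbf{V}^\diamond_k)$ the bound $\mathrm{Rank}(\mathbf{V}^\diamond_k)\leq 2$ follows exactly as you say (the paper phrases the same step as $\mathrm{Rank}(\mathbf{\Xi}^\diamond_k)\geq N-2$ plus Sylvester's inequality). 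The genuine gap is the injectivity of $\mathbf{P}_k$, which you correctly identify as the main obstacle but do not prove. Your observation that $\mathbf{P}_k=\mathbf{Z}_k+\mathbf{Q}_k\succeq 0$ is true but, as you concede, insufficient since the cross-pair coefficients $[\mathbf{M}_{i,l}]_{11}-[\mathbf{N}_{i,l}]_{11}$ may be negative. The proposed perturbation argument does not close the gap: subtracting $tvv^H$ from $\mathbf{V}^\diamond_k$ strictly reduces the harvested power $\sum_{l}\mathrm{Tr}(\mathbf{\Theta}_{i,l}\mathbf{V}_l)$ of every user $(i,l)$ with $\mathbf{g}^H_{i,l}v\neq 0$, and when both \eqref{P4-c} and \eqref{P4-mu} are active for such a user the only compensations available (lowering $\beta_{i,l}$ or $\mu_{i,l}$) tighten \eqref{P4-b} or force $\xi_{i,l}$ upward and hence tighten \eqref{P4-a}; you give no argument that this chain terminates at a feasible, strictly better point, and in general it need not --- that is precisely what stationarity rules out. (Also, $\mathbf{Q}_kv=\mathbf{0}$ yields $\mathbf{g}^H_{i,k}v=0$ only when the corresponding coefficient is strictly positive.)

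The paper closes this gap with a purely dual argument that your proposal is missing. It invokes the stationarity condition in $\beta_{i,k}$, which together with complementary slackness on the box constraints for $\beta_{i,k}$ gives
\begin{align}
\rho^\diamond_{i,k}=\frac{\eta(1-\beta^\diamond_{i,k})^2\sigma_z^2}{(\mu^\diamond_{i,k}\beta^{\diamond}_{i,k})^2}\,\lambda^\diamond_{i,k},\nonumber
\end{align}
i.e., each harvesting multiplier is a nonnegative multiple of the downlink-rate multiplier of the \emph{same} user. After this substitution every non-identity term of $\mathbf{\Upsilon}$ (your $\mathbf{P}_k$, up to the own-pair terms) is proportional to $\lambda^\diamond_{j,l}\mathbf{\Theta}_{j,l}$. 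Then, supposing $\mathbf{a}^H\mathbf{\Upsilon}\mathbf{a}=0$, sandwiching the KKT identity \eqref{Xi} for \emph{every} $k$ between $\mathbf{a}^H$ and $\mathbf{a}$ and using $\mathbf{\Xi}^\diamond_k\succeq 0$ and $\mathbf{\Theta}_{i,k}\succeq 0$ forces $\lambda^\diamond_{j,l}\mathbf{a}^H\mathbf{\Theta}_{j,l}\mathbf{a}=0$ for all $(j,l)$, whence $\mathbf{a}^H\mathbf{\Upsilon}\mathbf{a}=\|\mathbf{a}\|^2$, a contradiction. Without the $\beta$-stationarity relation tying $\rho^\diamond_{j,l}$ to $\lambda^\diamond_{j,l}$, this conclusion is unavailable; you need that ingredient (or an equivalent) to complete the proof.
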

\begin{proof}
To prove the proposition, define a set of users $\mathcal{G}=\{i,k: \beta_{i,k}=1, i\in\{1,2\},k\in\mathcal{K}\}$ who do not need energy harvesting.
First we construct the augmented Lagrangian with respect to $\mathbf{V}_k$ and $\mathbf{\beta}_{i,k}$ for
the rank relaxed problem of $\mathcal{P}2$ as \cite{22}
\begin{align}
&\mathcal{L}=
\sum_{k=1}^K\mathrm{Tr}(\mathbf{V}_{k})
-\sum_{k=1}^K\mathrm{Tr}(\mathbf{\Xi}_{k}\mathbf{V}_{k})
\nonumber\\
&
+\sum_{i=1}^2\sum_{k=1}^K\lambda_{i,k}\Big(
\frac{\sigma_z^2}{\beta_{i,k}}
-\frac{\mathrm{Tr}(\mathbf{\Theta}_{i,k}\mathbf{V}_{k})}{\theta_{i,k}}
+
\sum_{l\neq k}\mathrm{Tr}(\mathbf{\Theta}_{i,k}\mathbf{V}_l)+\sigma_u^2
\Big)
\nonumber\\
&+\sum_{(i,k)\notin\mathcal{G}}\rho_{i,k}\Big(\frac{\mu^2_{i,k}}{\eta(1-\beta_{i,k})}-
\sum_{l=1}^K\mathrm{Tr}(\mathbf{\Theta}_{i,k}\mathbf{V}_l)-\sigma_u^2\Big)
\nonumber\\
&
+\sum_{i=1}^2\sum_{k=1}^K\nu_{i,k}(\beta_{i,k}-1)-\sum_{i=1}^2\sum_{k=1}^K\tau_{i,k}\beta_{i,k},
\end{align}
where $\mathbf{\Xi}_k\succeq 0,\lambda_{i,k}\geq 0,\rho_{i,k}\geq 0,\nu_{i,k}\geq 0,\tau_{i,k}\geq 0$ are Lagrangian multipliers.
According to \cite{30}, the primal variables and the Lagrangian multipliers of any stationary point should together satisfy the KKT condition
$
\partial\mathcal{L}/\partial\mathbf{V}_k=0$ for any $k$ and $\partial\mathcal{L}/\partial \beta_{i,k}=0$ for any $(i,k)\notin\mathcal{G}$,
which can be explicitly expressed as
\begin{align}
&\mathbf{\Upsilon}
-\lambda^\diamond_{1,k}\Big(1+\frac{1}{{\theta_{1,k}}}\Big)\mathbf{\Theta}_{1,k}
-\lambda^\diamond_{2,k}\Big(1+\frac{1}{{\theta_{2,k}}}\Big)\mathbf{\Theta}_{2,k}
=\mathbf{\Xi}^\diamond_k \label{Xi},~~\forall k
\\
&-
\frac{\sigma_z^2\lambda^\diamond_{i,k}}{(\beta^{\diamond}_{i,k})^2}
+\frac{(\mu^\diamond_{i,k})^2\rho^\diamond_{i,k}}{\eta(1-\beta^\diamond_{i,k})^2}
+\nu^\diamond_{i,k}-\tau^\diamond_{i,k}=0,~~\forall (i,k)\notin\mathcal{G},
\label{B2}
\end{align}
where $\mathbf{\Upsilon}=\mathbf{I}
+\sum_{j=1}^2\sum_{l=1}^K\lambda^\diamond_{j,l}\mathbf{\Theta}_{j,l}
-\sum_{(j,l)\notin\mathcal{G}}\rho^\diamond_{j,l}\mathbf{\Theta}_{j,l}$.
On the other hand, with complementary slackness, we immediately have $\nu^\diamond_{i,k}=\tau^\diamond_{i,k}=0$ for all $(i,k)\notin\mathcal{G}$.
From \eqref{B2} and $\nu^\diamond_{i,k}=\tau^\diamond_{i,k}=0$, the Lagrangian multiplier $\rho^\diamond_{i,k}$ can be further expressed as
\begin{align}\label{rho}
\rho^\diamond_{i,k}=
\frac{\eta(1-\beta^\diamond_{i,k})^2\sigma_z^2}{(\mu^\diamond_{i,k}\beta^{\diamond}_{i,k})^2}\lambda^\diamond_{i,k},~~\forall (i,k)\notin\mathcal{G}.
\end{align}
Notice that $\mu^\diamond_{i,k}\neq0$ for any $(i,k)\notin\mathcal{G}$.
Now, by substituting \eqref{rho} into the definition of $\mathbf{\Upsilon}$, we have
\begin{align}\label{upsilon}
&\mathbf{\Upsilon}=\mathbf{I}
+\mathop{\sum}_{(j,l)\in\mathcal{G}}\lambda^\diamond_{j,l}\mathbf{\Theta}_{j,l}
\nonumber\\
&~~~~~
+\mathop{\sum}_{(j,l)\notin\mathcal{G}}\Big(1-\frac{\eta(1-\beta^\diamond_{j,l})^2\sigma_z^2}{(\mu^\diamond_{j,l}\beta^{\diamond}_{j,l})^2}\Big)
\lambda^\diamond_{j,l}\mathbf{\Theta}_{j,l}.
\end{align}

Below we will show that $\mathbf{\Upsilon}$ is of full rank, i.e., all of its eigenvalues are nonzero.
This can be proved by contradiction.
Assuming that there exists a vector $\mathbf{a}\neq\mathbf{0}$ such that $\mathbf{a}^H\mathbf{\Upsilon}\mathbf{a}=0$, then
by left multiplying $\mathbf{a}^H$ and right multiplying $\mathbf{a}$ to \eqref{Xi},
we have
\begin{align}\label{24}
&\mathbf{a}^H\mathbf{\Upsilon}\mathbf{a}-\lambda^\diamond_{1,k}\big(1+\frac{1}{{\theta_{1,k}}}\big)\mathbf{a}^H\mathbf{\Theta}_{1,k}\mathbf{a}
-\lambda^\diamond_{2,k}\big(1+\frac{1}{{\theta_{2,k}}}\big)\mathbf{a}^H\mathbf{\Theta}_{2,k}\mathbf{a}
\nonumber\\
&
=\mathbf{a}^H\mathbf{\Xi}^\diamond_k\mathbf{a}\geq 0,~~\forall k,
\end{align}
where the inequality is due to $\mathbf{\Xi}^\diamond_k\succeq 0$.
Using $\mathbf{a}^H\mathbf{\Upsilon}\mathbf{a}=0$, \eqref{24} further simplifies to
\begin{align}
&-\sum_{i=1}^2\lambda^\diamond_{i,k}\big(1+\frac{1}{{\theta_{i,k}}}\big)\mathbf{a}^H\mathbf{\Theta}_{i,k}\mathbf{a}
\geq 0.
\label{0=0}
\end{align}
Due to $\mathbf{\Theta}_{i,k}\succeq 0$, we have
$-\lambda^\diamond_{i,k}\big(1+\frac{1}{{\theta_{i,k}}}\big)\mathbf{a}^H\mathbf{\Theta}_{i,k}\mathbf{a}\leq 0$ holds.
Using this result and \eqref{0=0}, we have
$\lambda^\diamond_{i,k}\big(1+\frac{1}{{\theta_{i,k}}}\big)\mathbf{a}^H\mathbf{\Theta}_{i,k}
\mathbf{a}=0$ for any $(i,k)$.
Since $1+\frac{1}{{\theta_{i,k}}}\neq 0$, we have
$\lambda^\diamond_{i,k}\mathbf{a}^H\mathbf{\Theta}_{i,k}
\mathbf{a}=0$ for any $(i,k)$.
As a consequence,
by left multiplying $\mathbf{a}^H$ and right multiplying $\mathbf{a}$ to \eqref{upsilon},
we have
$\mathbf{a}^H\mathbf{\Upsilon}\mathbf{a}=\mathbf{a}^H\mathbf{I}\mathbf{a}=||\mathbf{a}||^2$.
Before \eqref{24}, we assumed that $\mathbf{a}^H\mathbf{\Upsilon}\mathbf{a}=0$, which leads to $\mathbf{a}=\mathbf{0}$.
This contradicts to $\mathbf{a}\neq\mathbf{0}$, and therefore $\mathbf{\Upsilon}$ is of full rank.

Using $\mathrm{Rank}(\mathbf{\Upsilon})=N$ and \eqref{Xi}, and due to $\mathbf{\Theta}_{i,k}$ defined under $\mathcal{P}2$ is of rank one,
we have
$
\mathrm{Rank}(\mathbf{\Xi}^\diamond_k)\geq N-2.
$
On the other hand, the complementary slackness condition for $\mathbf{\Xi}^\diamond_k$ and $\mathbf{V}^\diamond_k$ is
$\mathbf{\Xi}^\diamond_k\mathbf{V}^\diamond_k=\mathbf{0}$ which implies
$
\mathrm{Rank}(\mathbf{\Xi}^\diamond_k\mathbf{V}^\diamond_k)=0.
$
Now, using Sylvester's Inequality, we have
\begin{align}
\mathrm{Rank}(\mathbf{\Xi}^\diamond_k)+\mathrm{Rank}(\mathbf{V}^\diamond_k)-N\leq
\mathrm{Rank}(\mathbf{\Xi}^\diamond_k\mathbf{V}^\diamond_k).
\end{align}
Putting $\mathrm{Rank}(\mathbf{\Xi}^\diamond_k)\geq N-2$ and $\mathrm{Rank}(\mathbf{\Xi}^\diamond_k\mathbf{V}^\diamond_k)=0$ into the above inequality immediately contributes to $\mathrm{Rank}(\mathbf{V}^\diamond_k)\leq 2$ and the proof is completed.
\end{proof}
Notice that by setting $\beta_{i,k}=1$ for all $i,k$, \textbf{Proposition 2} applies to many systems without WPT, such as those in \cite{13} and \cite{16}.
Specifically, \textbf{Proposition 2} can be directly applied to the power minimization problem in multi-pair TWR \cite{13}.
Furthermore, for the multigroup multicast problem in \cite{16}, with the number of users in each group equal to $2$,
\textbf{Proposition 2} holds for the QoS problem, and
with the transformation in \cite[Claim 3]{16}, it also applies to the max-min fairness (MMF) problem.
While \cite{13,16} only claim that the rank-one solution is not guaranteed in their problems, \textbf{Proposition 2} reveals that the solution must be rank two or less.
Finally, when the number of users in each group in the multigroup multicast problem \cite{16} is larger than $2$, with a similar proof as in \textbf{Proposition 2}, it can be shown that $\mathrm{Rank}(\mathbf{V}_k)\leq G_k$ always holds for the QoS and MMF problems, where $G_k$ is the number of users in the $k^{\mathrm{th}}$ group.

\textbf{Proposition 2} indicates that the rank of $\mathbf{V}^\diamond_k$ is either 1 or 2.
When $\mathbf{V}^\diamond_k$ is rank-one, the rank relaxation is tight.
On the other hand, when $\mathbf{V}^\diamond_k$ is rank-two, transmission in blocks of Alamouti codes will also guarantee no performance loss \cite{28}.
Therefore, dropping the rank constraint in \eqref{P4-rank} will not affect the solution of $\mathcal{P}2$.

Problem $\mathcal{P}2$ without the rank constraints in \eqref{P4-rank} is a DC program, since the left hand side of \eqref{P4-a} and \eqref{P4-mu} are difference of convex functions.
DC programs received a lot of attention lately (e.g., in medical imaging, financial engineering, and machine learning \cite{B1}) due to the fact that a local optimal solution can be found in polynomial time using the inner approximation method \cite{25,B1,B2,A1,A6}.
Inner approximation is an iterative method where the DC part is replaced by a convex upper bound expanded around the last round solution.
In particular, define the second term on the left hand side of \eqref{P4-a} as $\Phi_{i,k}(\mathbf{w}_k,\xi_{i,k}):=
-|\mathbf{w}^H_k\mathbf{h}_{i,k}|^2/\xi_{i,k}$.
Assuming that the solution at the $n^{\mathrm{th}}$ iteration is given by $\{\mathbf{w}^{[n]}_k,\xi^{[n]}_{i,k}\}$,
now define another function $\tilde{\Phi}^{[n]}_{i,k}$ as
\begin{align}\label{gapp}
&\tilde{\Phi}^{[n]}_{i,k}(\mathbf{w}_k,\xi_{i,k})
:=-2\mathrm{Re}\Big(\frac{(\mathbf{w}^{[n]}_k)^H\mathbf{h}_{i,k}\mathbf{h}^H_{i,k}\mathbf{w}_k}{\xi^{[n]}_{i,k}}\Big)
\nonumber\\
&~~~~~~~~~~~~~~~~~~~~
+
\frac{(\mathbf{w}^{[n]}_k)^H\mathbf{h}_{i,k}\mathbf{h}^H_{i,k}\mathbf{w}^{[n]}_k}{(\xi^{[n]}_{i,k})^2}\xi_{i,k},
\end{align}
and the following property can be established.
\begin{property}
The function $\tilde{\Phi}^{[n]}_{i,k}$ satisfies the following:
(i) $\tilde{\Phi}^{[n]}_{i,k}(\mathbf{w}_k,\xi_{i,k})\geq \Phi_{i,k}(\mathbf{w}_k,\xi_{i,k})$,
(ii) $\tilde{\Phi}^{[n]}_{i,k}(\mathbf{w}^{[n]}_k,\xi^{[n]}_{i,k})=\Phi_{i,k}(\mathbf{w}^{[n]}_k,\xi^{[n]}_{i,k})$, and
(iii)
\begin{align}
&\dfrac{\partial \tilde{\Phi}^{[n]}_{i,k}(\mathbf{w}_k,\xi_{i,k})}{\partial\mathbf{w}_k}
\Big|_{\mathbf{w}_k=\mathbf{w}^{[n]}_k,\xi_{i,k}=\xi^{[n]}_{i,k}}
\nonumber\\
&~~~~~~~~~~~~~~~~~~~~
=
\dfrac{\partial \Phi_{i,k}(\mathbf{w}_k,\xi_{i,k})}{\partial\mathbf{w}_k}
\Big|_{\mathbf{w}_k=\mathbf{w}^{[n]}_k,\xi_{i,k}=\xi^{[n]}_{i,k}}
,\nonumber
\\
&\dfrac{\partial \tilde{\Phi}^{[n]}_{i,k}(\mathbf{w}_k,\xi_{i,k})}{\partial\xi_{i,k}}
\Big|_{\mathbf{w}_k=\mathbf{w}^{[n]}_k,\xi_{i,k}=\xi^{[n]}_{i,k}}
\nonumber\\
&~~~~~~~~~~~~~~~~~~~~
=
\dfrac{\partial \Phi_{i,k}(\mathbf{w}_k,\xi_{i,k})}{\partial\xi_{i,k}}
\Big|_{\mathbf{w}_k=\mathbf{w}^{[n]}_k,\xi_{i,k}=\xi^{[n]}_{i,k}}.\nonumber
\end{align}
\end{property}
\begin{proof}
See Appendix E.
\end{proof}
In addition to being linear in $\mathbf{w}_k$ and $\xi_{i,k}$, from \textbf{Property 3}, we can see
that $\tilde{\Phi}^{[n]}_{i,k}$ has the same value and gradient as $\Phi_{i,k}$ at point $\{\mathbf{w}^{[n]}_k,\xi^{[n]}_{i,k}\}$.
More importantly, the first part of \textbf{Property 3} indicates that if we replace $\Phi_{i,k}$ with $\tilde{\Phi}^{[n]}_{i,k}$, the
feasible set becomes smaller, thus the solution of such a problem would be a feasible solution to the rank-relaxed problem of $\mathcal{P}2$.
Following a similar proof to \textbf{Property 3}, the term $-\mu^2_{i,k}$ in constraint \eqref{P4-mu} can be
conservatively replaced by $-2\mu^{[n]}_{i,k}\mu_{i,k}+(\mu^{[n]}_{i,k})^2$, where $\mu^{[n]}_{i,k}$ is the solution at the $n^{\mathrm{th}}$ iteration.
With the above observation, the following problem is considered at the $(n+1)^{\mathrm{th}}$ iteration
\begin{subequations}
\begin{align}
&\mathcal{P}2-R[n+1]:\nonumber\\
&
~~~~~\mathop{\mathrm{min}}_{\substack{\{\mathbf{V}_k,\mathbf{w}_k,\xi_{i,k},\beta_{i,k},\\\mu_{i,k},I_{j,l,k}\}}}
~\sum_{k=1}^K \mathrm{Tr}(\mathbf{V}_{k})+\sum_{i=1}^2\sum_{k=1}^K\frac{1}{\xi_{i,k}} \nonumber \\
&\mathrm{s.t.}~~
\alpha_{i,k}\Big(\sum_{j=1}^2\sum_{l\neq k}I_{j,l,k}+\sigma_r^2\Big)+\tilde{\Phi}^{[n]}_{i,k}
\leq 0,~~\forall i,k\nonumber
\\
&~~~~~\left[
\begin{array}{cccc}
I_{j,l,k} & \mathbf{w}^H_k\mathbf{h}_{j,l}
\\
\mathbf{h}^H_{j,l}\mathbf{w}_k & \xi_{j,l}
\end{array}
\right]
\succeq 0,~~\forall j\in\{1,2\},\forall k,l\in\mathcal{K},l\neq k,
\nonumber
\\
&~~~~~
\left[
\begin{array}{cccc}
\dfrac{\mathrm{Tr}(\mathbf{\Theta}_{i,k}\mathbf{V}_k)}{\theta_{i,k}}-\mathop{\sum}_{l\neq k}\mathrm{Tr}(\mathbf{\Theta}_{i,k}\mathbf{V}_{l})
-\sigma_u^2 & \sigma_z
\\
\sigma_z & \beta_{i,k}
\end{array}
\right]
\succeq 0,
\nonumber\\
&~~~~~~
\forall i,k\nonumber
\\
&~~~~~
\left[
\begin{array}{cccc}
\sum_{l=1}^K\mathrm{Tr}(\mathbf{\Theta}_{i,k}\mathbf{V}_{l})+\sigma_u^2 & \mu_{i,k}
\\
\mu_{i,k} & \eta(1-\beta_{i,k})
\end{array}
\right]
\succeq 0,~~\forall i,k \nonumber
\\
&~~~~~~\frac{1}{\xi_{i,k}}-2\mu^{[n]}_{i,k}\mu_{i,k}+(\mu^{[n]}_{i,k})^2+2p_c-2E_{i,k}\leq 0,
~~
\forall i,k \nonumber
\\
&~~~~~~
\mathbf{V}_k\succeq 0,~||\mathbf{w}_k||\leq1,~~\forall k. \nonumber
\end{align}
\end{subequations}
Now the problem $\mathcal{P}2-R[n+1]$ is an SDP problem which can be efficiently solved by CVX, a Matlab-based software package for convex optimization \cite{22}.
Denote its optimal solution as $\{\mathbf{V}^*_k,\mathbf{w}^*_k,\beta^*_{i,k},\xi^*_{i,k},\mu^*_{i,k}\}$.
Then we can set $\{\mathbf{w}^{[n+1]}_k=\mathbf{w}^*_k,\xi^{[n+1]}_{i,k}=\xi^*_{i,k},\mu^{[n+1]}_{i,k}=\mu^*_{i,k}\}$,
and the process repeats with solving the problem $\mathcal{P}2-R[n+2]$.
According to \textbf{Property 3} and \cite[Theorem 1]{25}, the iterative algorithm is convergent.
Furthermore, the converged point $\{\widetilde{\mathbf{V}}_k,\widetilde{\beta}_{i,k},\widetilde{\mathbf{w}}_k,\widetilde{\xi}_{i,k}\}$ is guaranteed to be a stationary point for the rank-relaxed problem of $\mathcal{P}2$\cite{25,A1,A6,B1,B2}.

In terms of computation effort, solving problem $\mathcal{P}2-R[n+1]$ is dominated by $K$ $N$-dimension SDP cones,
which requires complexity $O\Big(K^{3.5}N^{2.5}+K^{2.5}N^{3.5}\Big)$ \cite{20}.
Therefore, the total complexity is $O\Big(t(K^{3.5}N^{2.5}+K^{2.5}N^{3.5})\Big)$, where $t$ is the number of iterations needed for the inner approximation to converge, which is very small as shown by simulations.

With the obtained solution, the next stage is to recover the beamformer $\widetilde{\mathbf{v}}_k$ of $\mathcal{P}1$.
More specifically,
when $\widetilde{\mathbf{V}}_k$ is rank-one, the beamformer $\widetilde{\mathbf{v}}_k$ can be recovered by the rank-one decomposition of $\widetilde{\mathbf{V}}_k$.
On the other hand, when $\widetilde{\mathbf{V}}_k$ is rank-two,
by allocating power $\widetilde{p}_{k}=\mathrm{Tr}(\widetilde{\mathbf{V}}_k)$ at relay for
 the $k^{\mathrm{th}}$ downlink data stream,
 and applying rank-two decomposition, we obtain
 $\widetilde{\mathbf{V}}_k/\widetilde{p}_{k}=[\widetilde{\mathbf{f}}_{1,k}~\widetilde{\mathbf{f}}_{2,k}]
 [\widetilde{\mathbf{f}}_{1,k}~\widetilde{\mathbf{f}}_{2,k}]^H$.
 Then, the relay transmits $[\widetilde{\mathbf{f}}_{1,k}~\widetilde{\mathbf{f}}_{2,k}]\mathbf{S}_{k}(m)$ at each time duration for two symbols,
 where $\mathbf{S}_{k}(m)\in\mathbb{C}^{2\times 2}$ is the $k^{\mathrm{th}}$ downlink data stream grouped into blocks of Alamouti code:
\begin{equation}\label{35}
\mathbf{S}_{k}(m)=
\left[
\begin{array}{cccc}
s_{k}(2m-1) & s_{k}(2m)
\\
-s_{k}^*(2m) & s_{k}^*(2m-1)
\end{array}
\right], \nonumber
\end{equation}
with $m=1,...,M/2$.
As shown in \cite{28}, such a transmission incurs no loss when used together with a rank-two beamformer.
Consequently, the proposed algorithm is guaranteed to achieve at least a stationary point of $\mathcal{P}1$.

Finally, for the inner approximation algorithm, finding a good feasible starting point is of large importance \cite{B1}.
For $N\geq 2K-1$, since the degree of freedom at relay is sufficient to support the number of users,
$\mathcal{P}2$ is always feasible, and
a simple initialization is fixing the transmit and receive beamformer using pairwise zero-forcing criterion \cite[Section IV]{14}.
Then $\mathcal{P}2$ is an SDP with low dimension.
However, when $N<2K-1$, since the first constraint of $\mathcal{P}2$ involves DC functions, it is generally hard to find a feasible initial point or provide a feasibility condition \cite{A1,A6,B1}.
A traditional method is using $l_1$-norm regularization to pursuit a feasible point \cite{A6,B1}, which involves multiple conic programs (CPs).
Below, by exploiting the structure of problem $\mathcal{P}2$, we propose a CP-free initialization.

In particular, from the first two constraints of problem $\mathcal{P}2$, an initial point
$\{\mathbf{w}^{[0]}_k,\xi^{[0]}_{i,k}\}$ is feasible if and only if
\begin{align}
&
\mathop{\mathrm{min}}_{\substack{i\in\{1,2\}\\k\in\mathcal{K}}}\frac{|(\mathbf{w}^{[0]}_k)^H\mathbf{h}_{i,k}|^2/\xi^{[0]}_{i,k}}{\alpha_{i,k}\Big(\sum_{j=1}^2\sum_{l\neq k}|(\mathbf{w}^{[0]}_k)^H\mathbf{h}_{j,l}|^2/\xi^{[0]}_{j,l}+\sigma_r^2\Big)}\geq 1.
\nonumber
\end{align}
To find such a feasible solution, we consider the following problem with the objective function equal to the left hand side of the above inequality
\begin{align}\label{C1}
&\mathop{\mathrm{max}}_{\{\mathbf{w}_k,\xi_{i,k}>0\}}~
\mathop{\mathrm{min}}_{\substack{i\in\{1,2\}\\k\in\mathcal{K}}}\frac{|\mathbf{w}_k^H\mathbf{h}_{i,k}|^2/\xi_{i,k}}{\alpha_{i,k}\Big(\sum_{j=1}^2\sum_{l\neq k}|\mathbf{w}_k^H\mathbf{h}_{j,l}|^2/\xi_{j,l}+\sigma_r^2\Big)}
\nonumber\\
&~~~~~~\mathrm{s.t.}~~~~~~\sum_{i=1}^2\sum_{k=1}^K\frac{1}{\xi_{i,k}}\leq P,~||\mathbf{w}_k||=1,~~\forall k,
\end{align}
where $P$ is a sufficiently large control parameter.
Then a feasible initial point can be found by alternatively updating $\mathbf{w}_{k}$ and $\xi_{i,k}$ in \eqref{C1} until its objective function is greater than or equal to 1.
An advantage of such a \emph{block coordinate descent method} is that closed-form optimal solutions for each subproblem of \eqref{C1} can be obtained as proved in Appendix F.
After obtaining $\{\mathbf{w}^{[0]}_k,\xi^{[0]}_{i,k}\}$, we set $\mu^{[0]}_{i,k}=\sqrt{(1/\xi^{[0]}_{i,k}+2p_c-2E_{i,k})^+}$
according to the constraint \eqref{P4-mu} of $\mathcal{P}2$.
As each step is in a closed-form, the complexity of the proposed initialization is inconsequential.
Therefore, the proposed CP-free initialization also represents a \emph{low-complexity} solution for solving $\mathcal{P}2$.

Notice that even if $\mathcal{P}2$ is feasible, currently there is no method guaranteed to find a feasible solution due to the nature of DC programming problems \cite{B1}.
While the proposed CP-free method cannot guarantee a feasible initialization either, it has a very high probability of finding a feasible initialization since the block coordinate descent method converges to at least a stationary point of \eqref{C1} \cite[Section 3.1]{R5}.
Moreover, the probability of finding a feasible initialization can be further increased using multiple starters of $\{\xi_{i,k}\}$ \cite{A6}.

\section{Lower Bound and Large $N$ Analysis of $\mathcal{P}1$}
\subsection{Lower Bound of $\mathcal{P}1$}
In order to assess the performance of the proposed solution in Section IV-B, in this section, a design which provides a lower bound to the transmit power of
the optimal solution of $\mathcal{P}1$ is derived as a benchmark.
In particular, consider the following problem $\mathcal{P}1-R$:
\begin{subequations}
\begin{align}
&\mathcal{P}1-R:~\mathop{\mathrm{min}}_{\substack{\{\mathbf{V}_k,\mathbf{b}_{i,k},q_{i,k},\beta_{i,k}\}}}
~\sum_{k=1}^K \mathrm{Tr}(\mathbf{V}_k)+\sum_{i=1}^2\sum_{k=1}^K q_{i,k} \nonumber \\
&\mathrm{s.t.}~~q_{i,k}|\mathbf{b}^H_{i,k}\mathbf{h}_{i,k}|^2\geq\alpha_{i,k}
\Big(\sum_{j=1}^2\sum_{l\neq k}q_{j,l}|\mathbf{b}^H_{i,k}\mathbf{h}_{j,l}|^2
\nonumber\\
&~~~~~~~~~~~~~~~~~~~~~~~~~~~~~~~~~
+\sigma_r^2||\mathbf{b}_{i,k}||^2\Big),~~\forall i,k
\label{2ra}
\\
&~~~~~~
\beta_{i,k}\Big(\frac{\mathrm{Tr}(\mathbf{\Theta}_{i,k}\mathbf{V}_k)}{\theta_{i,k}}-\mathop{\sum}_{l\neq k}\mathrm{Tr}(\mathbf{\Theta}_{i,k}\mathbf{V}_{l})-\sigma_u^2\Big)
\nonumber\\
&~~~~~
\geq\sigma^2_z
,~~\forall i,k
\label{2rb}
\\
&~~~~~~
\eta(1-\beta_{i,k})(\sum_{l=1}^K\mathrm{Tr}(\mathbf{\Theta}_{i,k}\mathbf{V}_{l})+\sigma_u^2)
\nonumber\\
&~~~~~
\geq
q_{i,k}+2p_c-2E_{i,k}
,~~\forall i,k
\label{2rc}
\\
&~~~~~~q_{i,k}\geq0,~\beta_{i,k}\in (0,1),~~\forall i,k,~~
\mathbf{V}_k\succeq 0,~~\forall k.
\label{2rd}
\end{align}
\end{subequations}
It can be seen that if we restrict $\mathbf{b}_{1,k}=\mathbf{b}_{2,k}=\mathbf{w}_k$ and $\mathrm{Rank}(\mathbf{V}_{k})=1$ in $\mathcal{P}1-R$, it reduces to $\mathcal{P}1$.
Therefore, the feasible set of $\mathcal{P}1-R$ is larger than that of $\mathcal{P}1$, and $\mathcal{P}1-R$ is a relaxed version of $\mathcal{P}1$.
The insight behind such a relaxation is that $\{\mathbf{b}_{i,k}\}$ represents \emph{virtual receivers} at relay, which attenuates
the pairwise uplink-downlink coupling.
In the following,
we will show that $\mathcal{P}1-R$ can be optimally solved in two steps.

From $\mathcal{P}1-R$, it can be seen that $\{\mathbf{b}_{i,k},q_{i,k}\}$ is only involved in \eqref{2ra}, \eqref{2rc} and the first part of \eqref{2rd}.
Observing from \eqref{2rc} that smaller $q_{i,k}$ loosens the constraints on $\mathbf{V}_k$, which helps in reducing the objective value, the optimal $\{q^*_{i,k},\mathbf{b}^*_{i,k}\}$ of $\mathcal{P}1-R$ can be therefore obtained from solving
\begin{align}\label{P5R}
&\mathcal{P}5:\mathop{\mathrm{min}}_{\{\mathbf{b}_{i,k},q_{i,k}\geq0\}}~\Big[q_{1,1},q_{1,2},...,q_{1,K},q_{2,K}\Big]
\nonumber\\
&~\mathrm{s.t.}~~~q_{i,k}|\mathbf{b}^H_{i,k}\mathbf{h}_{i,k}|^2\geq\alpha_{i,k}
\Big(\sum_{j=1}^2\sum_{l\neq k}q_{j,l}|\mathbf{b}^H_{i,k}\mathbf{h}_{j,l}|^2
\nonumber\\
&~~~~~~~~~~~~~~~~~~~~~~~~~~~~~~~~~~~
+\sigma_r^2||\mathbf{b}_{i,k}||^2\Big),~~\forall i,k.
\end{align}
Then by putting the optimal $\{q^*_{i,k},\mathbf{b}^*_{i,k}\}$ of $\mathcal{P}5$ into $\mathcal{P}1-R$, the
problem can be simplified and the optimal $\{\mathbf{V}^*_k,\beta^*_{i,k}\}$ can be found from the simplified problem.

To solve problem $\mathcal{P}5$, which is a multi-criterion optimization,
consider the following iteration:
\begin{align}
&\mathbf{z}^{[n+1]}_{i,k}=\frac{\Big(\sum_{j}\sum_{l\neq k}\omega^{[n]}_{j,l}\mathbf{h}_{j,l}\mathbf{h}^H_{j,l}+\sigma_r^2\mathbf{I}\Big)^{-1}\mathbf{h}_{i,k}}{\Big|\Big|\Big(\sum_{j}\sum_{l\neq k}\omega^{[n]}_{j,l}\mathbf{h}_{j,l}\mathbf{h}^H_{j,l}+\sigma_r^2\mathbf{I}\Big)^{-1}\mathbf{h}_{i,k}\Big|\Big|},
\label{bik}
\\
&\omega^{[n+1]}_{i,k}=\frac{\alpha_{i,k}(\sum_{j=1}^2\sum_{l\neq k}\omega^{[n]}_{j,l}|(\mathbf{z}^{[n+1]}_{i,k})^H\mathbf{h}_{j,l}|^2+\sigma_r^2)}{|(\mathbf{z}^{[n+1]}_{i,k})^H\mathbf{h}_{i,k}|^2},
\label{qik}
\end{align}
and the following proposition can be established.
\begin{proposition}
With $\omega_{i,k}^{[0]}=0$ for all $i,k$, the sequence $\omega^{[n]}_{i,k}$ is convergent, and the limit point is the optimal solution of $\mathcal{P}5$.
\end{proposition}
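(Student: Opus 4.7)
The plan is to recognize the iteration \eqref{bik}--\eqref{qik} as an instance of the classical standard interference function framework (in the spirit of Yates). First, I would note that for any fixed profile $\{\omega^{[n]}_{j,l}\}$, the vector $\mathbf{z}^{[n+1]}_{i,k}$ in \eqref{bik} is precisely the unit-norm beamformer that maximizes the generalized Rayleigh quotient $|\mathbf{z}^H\mathbf{h}_{i,k}|^2/\mathbf{z}^H(\sum_{j,l\neq k}\omega^{[n]}_{j,l}\mathbf{h}_{j,l}\mathbf{h}^H_{j,l}+\sigma_r^2\mathbf{I})\mathbf{z}$. Hence \eqref{qik} returns exactly the minimum power $\omega^{[n+1]}_{i,k}$ required for user $(i,k)$ to meet its SINR target under the interference generated by $\{\omega^{[n]}_{j,l}\}$. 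In other words, defining the vector-valued map
\begin{equation}
\mathcal{I}_{i,k}(\boldsymbol{\omega})=\min_{\|\mathbf{z}\|=1}\frac{\alpha_{i,k}\bigl(\sum_{j,l\neq k}\omega_{j,l}|\mathbf{z}^H\mathbf{h}_{j,l}|^2+\sigma_r^2\bigr)}{|\mathbf{z}^H\mathbf{h}_{i,k}|^2},\nonumber
\end{equation}
the recursion reads $\boldsymbol{\omega}^{[n+1]}=\mathcal{I}(\boldsymbol{\omega}^{[n]})$.

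Next, I would verify that $\mathcal{I}$ is a standard interference function: (i) positivity, which follows from $\sigma_r^2>0$; (ii) monotonicity, i.e.\ $\boldsymbol{\omega}'\geq\boldsymbol{\omega}$ implies $\mathcal{I}(\boldsymbol{\omega}')\geq\mathcal{I}(\boldsymbol{\omega})$, which follows because the objective inside the minimum is increasing in each $\omega_{j,l}$ for every fixed $\mathbf{z}$, and the minimum preserves the pointwise ordering; and (iii) scalability, i.e.\ $c\,\mathcal{I}(\boldsymbol{\omega})>\mathcal{I}(c\boldsymbol{\omega})$ for any $c>1$, which follows because scaling $\boldsymbol{\omega}$ by $c$ only scales the interference term, while the noise term $\sigma_r^2$ is unchanged, so the ratio is strictly less than $c$ times the original. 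With these three properties established, the Yates convergence theorem applies: the synchronous iteration $\boldsymbol{\omega}^{[n+1]}=\mathcal{I}(\boldsymbol{\omega}^{[n]})$ initialized at $\boldsymbol{\omega}^{[0]}=\mathbf{0}$ produces a componentwise nondecreasing sequence that converges to the unique fixed point $\boldsymbol{\omega}^\star$, whenever a feasible power allocation exists.

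Finally, I would argue that the limit $\boldsymbol{\omega}^\star$ is in fact the Pareto-optimal solution of the multi-criterion problem $\mathcal{P}5$. Any feasible $\boldsymbol{\omega}$ must satisfy $\boldsymbol{\omega}\geq\mathcal{I}(\boldsymbol{\omega})$ componentwise (because the user's own power must dominate the minimum power required against the interference it sees). Combined with monotonicity and $\boldsymbol{\omega}^{[0]}=\mathbf{0}\leq\boldsymbol{\omega}$, a straightforward induction gives $\boldsymbol{\omega}^{[n]}\leq\boldsymbol{\omega}$ for all $n$, hence $\boldsymbol{\omega}^\star\leq\boldsymbol{\omega}$. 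Thus $\boldsymbol{\omega}^\star$ is the componentwise minimum over the feasible set, which is precisely the unique Pareto-optimal vector of $\mathcal{P}5$; the optimal virtual receivers $\mathbf{b}^\star_{i,k}$ are then read off from \eqref{bik} evaluated at $\boldsymbol{\omega}^\star$.

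The main obstacle is the monotonicity step, because both the beamformer $\mathbf{z}^{[n+1]}_{i,k}$ and the resulting interference change simultaneously with $\boldsymbol{\omega}$, so the two effects could a priori cancel. The right viewpoint is to treat $\mathcal{I}_{i,k}$ as the \emph{minimum value} of a family of functions that is monotone in $\boldsymbol{\omega}$ for every fixed $\mathbf{z}$; monotonicity of the minimum then follows without having to track how the minimizer moves. A minor subsidiary issue is existence of a fixed point, which can be handled either by invoking a standing feasibility assumption for $\mathcal{P}5$ or by exhibiting any feasible $\boldsymbol{\omega}$ (e.g.\ obtained via zero-forcing when $N$ is large enough), which then bounds the iterates by the earlier induction argument and forces convergence of the monotone sequence.
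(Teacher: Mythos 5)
Your proposal is correct, and the underlying mechanism is the same as the paper's, but you package it differently. The paper proves everything from scratch with two explicit inductions: it shows $\omega^{[n]}_{i,k}$ is monotonically nondecreasing (using exactly the observation you highlight, namely that $\mathbf{z}^{[n]}_{i,k}$ in \eqref{bik} is the minimizer of the required-power ratio, cf.\ \eqref{reasonbik}), shows $\omega^{[n]}_{i,k}\leq \overline{q}_{i,k}$ for any feasible $\{\overline{q}_{i,k},\overline{\mathbf{b}}_{i,k}\}$ by a second induction, concludes convergence from monotone-plus-bounded, and then verifies the limit is a fixed point (hence feasible) dominated by every feasible point (hence optimal for the multi-criterion problem $\mathcal{P}5$). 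You instead identify the update \eqref{bik}--\eqref{qik} as $\boldsymbol{\omega}^{[n+1]}=\mathcal{I}(\boldsymbol{\omega}^{[n]})$ for a standard interference function and cite Yates' framework; the monotonicity and boundedness inductions are then absorbed into the citation, and your final Pareto-minimality argument coincides with the paper's step (e). What your route buys is the extra scalability property and with it uniqueness of the fixed point, which the paper neither needs nor establishes; what it costs is reliance on an external theorem whose hypotheses (in particular, existence of a feasible point so that the monotone sequence stays bounded) you must still check, and which you correctly flag. Your resolution of the ``minimizer moves with $\boldsymbol{\omega}$'' concern---viewing $\mathcal{I}_{i,k}$ as a pointwise minimum of functions each monotone in $\boldsymbol{\omega}$---is precisely the role played by \eqref{reasonbik} in the paper's chain of inequalities, so the two proofs are interchangeable in substance.
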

\begin{proof}
First we prove $\omega^{[n]}_{i,k}$ is monotonically increasing by induction.
Since $\omega_{i,k}^{[0]}=0\leq \omega_{i,k}^{[1]}$, we assume that $\omega_{i,k}^{[n-1]}\leq \omega_{i,k}^{[n]}$ for all $(i,k)$ with some $n\geq1$. From \eqref{qik}, we have
\begin{align}
&\omega^{[n]}_{i,k}=\frac{\alpha_{i,k}(\sum_{j=1}^2\sum_{l\neq k}\omega^{[n-1]}_{j,l}|(\mathbf{z}^{[n]}_{i,k})^H\mathbf{h}_{j,l}|^2+\sigma_r^2)}{|(\mathbf{z}^{[n]}_{i,k})^H\mathbf{h}_{i,k}|^2}
\nonumber\\
&~~~~\leq
\frac{\alpha_{i,k}(\sum_{j=1}^2\sum_{l\neq k}\omega^{[n-1]}_{j,l}|(\mathbf{z}^{[n+1]}_{i,k})^H\mathbf{h}_{j,l}|^2+\sigma_r^2)}{|(\mathbf{z}^{[n+1]}_{i,k})^H\mathbf{h}_{i,k}|^2}
\nonumber\\
&~~~~\leq
\frac{\alpha_{i,k}(\sum_{j=1}^2\sum_{l\neq k}\omega^{[n]}_{j,l}|(\mathbf{z}^{[n+1]}_{i,k})^H\mathbf{h}_{j,l}|^2+\sigma_r^2)}{|(\mathbf{z}^{[n+1]}_{i,k})^H\mathbf{h}_{i,k}|^2}
\nonumber\\
&~~~~=
\omega^{[n+1]}_{i,k}
, \label{i}
\end{align}
where the inequality in the second line is due to
\begin{align}
\mathbf{z}^{[n]}_{i,k}=\mathrm{arg}~\mathop{\mathrm{min}}_{\mathbf{z}_{i,k}}~\frac{\alpha_{i,k}(\sum_{j=1}^2\sum_{l\neq k}\omega^{[n-1]}_{j,l}|\mathbf{z}_{i,k}^H\mathbf{h}_{j,l}|^2+\sigma_r^2)}{|\mathbf{z}_{i,k}^H\mathbf{h}_{i,k}|^2}
\label{reasonbik}
\end{align}
which can be obtained from \eqref{bik} and \cite[Corollary 2]{27},
the inequality in the third line is due to $\omega_{i,k}^{[n-1]}\leq \omega_{i,k}^{[n]}$,
and the equality in the last line is due to \eqref{qik}.
Thus $\omega^{[n]}_{i,k}$ is monotonically increasing.

Then we prove $\omega^{[n]}_{i,k}$ is upper bounded.
Specifically, we will show that
$\omega_{i,k}^{[n]}\leq \overline{q}_{i,k}$ by induction, where
$\{\overline{q}_{i,k},\mathbf{\overline{b}}_{i,k}\}$ is an arbitrary feasible solution of $\mathcal{P}5$.
Since $\omega_{i,k}^{[0]}=0\leq \overline{q}_{i,k}$, we assume that $\omega_{i,k}^{[n]}\leq \overline{q}_{i,k}$ for some $n\geq0$.
Using \eqref{qik}, we have
\begin{align}
&\omega^{[n+1]}_{i,k}=\frac{\alpha_{i,k}(\sum_{j=1}^2\sum_{l\neq k}\omega^{[n]}_{j,l}|(\mathbf{z}^{[n+1]}_{i,k})^H\mathbf{h}_{j,l}|^2+\sigma_r^2)}{|(\mathbf{z}^{[n+1]}_{i,k})^H\mathbf{h}_{i,k}|^2}
\nonumber\\
&~~~~~~~\leq
\frac{\alpha_{i,k}(\sum_{j=1}^2\sum_{l\neq k}\omega^{[n]}_{j,l}|\mathbf{\overline{b}}_{i,k}^H\mathbf{h}_{j,l}|^2+\sigma_r^2)}{|\mathbf{\overline{b}}_{i,k}^H\mathbf{h}_{i,k}|^2}
\nonumber\\
&~~~~~~~\leq
\frac{\alpha_{i,k}(\sum_{j=1}^2\sum_{l\neq k}\overline{q}_{j,l}|\mathbf{\overline{b}}_{i,k}^H\mathbf{h}_{j,l}|^2+\sigma_r^2)}{|\mathbf{\overline{b}}_{i,k}^H\mathbf{h}_{i,k}|^2}
\nonumber\\
&~~~~~~~\leq
\overline{q}_{i,k}
,
\end{align}
where the inequality in the second line is due to \eqref{reasonbik},
the inequality in the third line is due to $\omega_{i,k}^{[n]}\leq \overline{q}_{i,k}$,
and the inequality in the last line is due to the constraint in \eqref{P5R}.
Therefore as long as $\mathcal{P}5$ has a feasible solution, the sequence $\omega_{i,k}^{[n]}$ is upper bounded.

Lastly, using the above two results,
we immediately have $\omega_{i,k}^{[n]}$ is convergent \cite{19}.
Furthermore, by putting $\omega_{i,k}^{[n+1]}=\omega_{i,k}^{[n]}=\omega_{i,k}^{[\infty]}$ into \eqref{qik}, the limit point $\omega_{i,k}^{[\infty]}$ satisfies
\begin{align}
&\omega^{[\infty]}_{i,k}=\frac{\alpha_{i,k}(\sum_{j=1}^2\sum_{l\neq k}\omega^{[\infty]}_{j,l}|(\mathbf{z}^{[\infty]}_{i,k})^H\mathbf{h}_{j,l}|^2+\sigma_r^2)}{|(\mathbf{z}^{[\infty]}_{i,k})^H\mathbf{h}_{i,k}|^2},
\end{align}
which implies that $\{\omega^{[\infty]}_{i,k},\mathbf{z}^{[\infty]}_{i,k}\}$ is a feasible solution for $\mathcal{P}5$.
On the other hand, since we obtained $\omega_{i,k}^{[n]}\leq \overline{q}_{i,k}$ for any feasible $\overline{q}_{i,k}$,
$\{\omega^{[\infty]}_{i,k},\mathbf{z}^{[\infty]}_{i,k}\}$ is therefore optimal for $\mathcal{P}5$.
\end{proof}

Based on \textbf{Proposition 3}, the optimal solution of $\mathcal{P}5$ is $\{q_{i,k}=\omega_{i,k}^{[\infty]},\mathbf{b}_{i,k}=\mathbf{z}_{i,k}^{[\infty]}\}$.
Then by substituting them into $\mathcal{P}1-R$, the problem $\mathcal{P}1-R$ reduces to
\begin{align}
&~~~~~\mathop{\mathrm{min}}_{\substack{\{\mathbf{V}_k\succeq 0,\beta_{i,k}\}}}
~\sum_{k=1}^K\mathrm{Tr}(\mathbf{V}_{k})+\sum_{i=1}^2\sum_{k=1}^K\omega^{[\infty]}_{i,k} \nonumber\\
&\mathrm{s.t.}~\left[
\begin{array}{cccc}
\dfrac{\mathrm{Tr}(\mathbf{\Theta}_{i,k}\mathbf{V}_k)}{\theta_{i,k}}-\mathop{\sum}_{l\neq k}\mathrm{Tr}(\mathbf{\Theta}_{i,k}\mathbf{V}_{l})
-\sigma_u^2 & \sigma_z
\\
\sigma_z & \beta_{i,k}
\end{array}
\right]
\succeq 0,
\nonumber\\
&~~~~~~\forall i,k
\nonumber\\
&
\left[
\begin{array}{cccc}
\sum_{l=1}^K\mathrm{Tr}(\mathbf{\Theta}_{i,k}\mathbf{V}_{l})+\sigma_u^2 & \sqrt{(\omega^{[\infty]}_{i,k}+2p_c-2E_{i,k})^+}
\\
\sqrt{(\omega^{[\infty]}_{i,k}+2p_c-2E_{i,k})^+} & \eta(1-\beta_{i,k})
\end{array}
\right]
\nonumber\\
&
\succeq 0,~~\forall i,k.
\nonumber
\end{align}
The above problem is an SDP which can be optimally solved by CVX.
Since we have shown that $\mathcal{P}1-R$ can be equivalently transformed into a two-step optimization, and we solve each step optimally,
the obtained solution is optimal for $\mathcal{P}1-R$, and would be a lower bound for $\mathcal{P}1$.

\subsection{Large $N$ Analysis of $\mathcal{P}1$}
In this subsection, wirelessly powered multi-pair TWR with very large number of antennas at relay, i.e., $N\rightarrow\infty$, is analyzed.
Specifically, when $N$ is very large, user channels will be asymptotically orthogonal, and the interference vanishes \cite[Proposition 1]{R2}.
Therefore, problem $\mathcal{P}1$ asymptotically becomes
\begin{align}
&\mathcal{P}6:\mathop{\mathrm{min}}_{\substack{\{\mathbf{v}_{i,k},\mathbf{w}_{i,k},p_{i,k},\\q_{i,k},\beta_{i,k}\}}}
~\sum_{i=1}^2\sum_{k=1}^K\Big(p_{i,k}+q_{i,k}\Big) \nonumber\\
&\mathrm{s.t.}~~
\frac{q_{i,k}|\mathbf{w}^H_{i,k}\mathbf{h}_{i,k}|^2}{\sigma_r^2}
\geq \theta_{3-i,k},~~\forall i,k
\nonumber
\\
&~~~~~~
\frac{\beta_{i,k}p_{i,k}|\mathbf{g}^H_{i,k}\mathbf{v}_{i,k}|^2}{\beta_{i,k}\sigma_u^2+\sigma_z^2}
\geq \theta_{i,k},~~
\forall i,k
\nonumber \\
&~~~~~~\eta(1-\beta_{i,k})\Big(p_{i,k}|\mathbf{g}^H_{i,k}\mathbf{v}_{i,k}|^2+\sigma_u^2\Big)+2E_{i,k}-2p_c\geq q_{i,k},
\nonumber\\
&~~~~~~
\forall i,k
\nonumber\\
&
~~~~~~q_{i,k}\geq0,~\beta_{i,k}\in (0,1],~~\forall i,k
\nonumber\\
&~~~~~~||\mathbf{v}_{i,k}||=1,~||\mathbf{w}_{i,k}||=1,~p_{i,k}\geq0,~~\forall i,k.
\nonumber
\end{align}
Notice that there is no need to apply network coding since all the channels are asymptotically orthogonal.
In this case, the uplink and downlink SINR targets become $\theta_{3-i,k}=2^{2\overline{R}_{i,k}}-1$
and $\theta_{i,k}=2^{2\overline{R}_{3-i,k}}-1$, respectively.
Observing from $\mathcal{P}6$ that minimizing $p_{i,k}$ and $q_{i,k}$ requires matching the beamformers to their corresponding channels, the optimal $\mathbf{w}^*_{i,k}=\mathbf{h}_{i,k}/||\mathbf{h}_{i,k}||$ and
$\mathbf{v}^*_{i,k}=\mathbf{g}_{i,k}/||\mathbf{g}_{i,k}||$.
Putting $\mathbf{w}^*_{i,k}$ into the first constraint of $\mathcal{P}6$, it reduces into
$q_{i,k}||\mathbf{h}_{i,k}||^2
\geq \theta_{3-i,k}\sigma_r^2$.
Since the effect of fast fading vanishes when $N$ is large \cite{R2}, we have $||\mathbf{h}_{i,k}||^2=N\varrho_{i,k}$, where
$\varrho_{i,k}$ is the large-scale fading of $\mathbf{h}_{i,k}$,
and thus
\begin{align}
q^*_{i,k}=\frac{\theta_{3-i,k}\sigma^2_r}{N\varrho_{i,k}}.\nonumber
\end{align}
This result is consistent with the fact that information beamforming gain is proportional to $N$ in large-scale antenna systems \cite{R2}.
Therefore, given certain data-rate QoS, the users' transmit powers can be decreased proportionally to $1/N$ while still satisfying the QoS constraints.

To determine the network transmit power, we also need to solve for $p^*_{i,k}$ in $\mathcal{P}6$.
Based on the procedure given in Appendix G, its optimal solution is given by
\begin{align}\label{pik}
&p^*_{i,k}=
\left\{
\begin{aligned}
&\frac{\theta_{i,k}}{N\varrho_{i,k}}(\sigma_u^2+\sigma_z^2),~~~~~~~~~
\mathrm{if}~\frac{\theta_{3-i,k}\sigma^2_r}{N\varrho_{i,k}}+2p_c\leq2E_{i,k}
\\
&
\frac{\theta_{i,k}}{N\varrho_{i,k}}\Big(
\frac
{B_{i,k}
+\sqrt{B^2_{i,k}+4\theta_{i,k}(\theta_{i,k}+1)\sigma^2_u\sigma^2_z}}
{2\theta_{i,k}}
\\
&~~~~~~~~~+\sigma_u^2
\Big),~~~~~~~~~~~
\mathrm{if}~\frac{\theta_{3-i,k}\sigma^2_r}{N\varrho_{i,k}}+2p_c>2E_{i,k}
\end{aligned}
\right.,
\end{align}
with $B_{i,k}=\theta_{i,k}\sigma^2_z-(\theta_{i,k}+1)\sigma^2_u+[\theta_{3-i,k}\sigma^2_r(N\varrho_{i,k})^{-1}+2p_c-2E_{i,k}]/\eta$.
From the analytical solution of $p^*_{i,k}$, we can see that
the relay transmit power can be decreased proportionally to $1/N$ as well.
This result is consistent with the fact that energy beamforming gain is proportional to $N$ in large-scale antenna systems \cite{R3}.
Based on the scaling law of relay and user transmit powers, we can conclude that when $N$ is large, the network transmit power can be reduced proportionally to $1/N$ while still satisfying the data-rate QoS constraints and energy harvesting constraints.

\section{Simulation Results and Discussions}

This section provides simulation results to demonstrate the performance of the proposed schemes.
In particular,
the distance-dependent pathloss model of the $(i,k)^{\mathrm{th}}$ user $\varrho_{i,k}=\varrho_0\cdot(\frac{d_{i,k}}{d_0})^{-\alpha}$ is adopted \cite{11}, where $d_{i,k}$ is the distance from the $(i,k)^{\mathrm{th}}$ user to the relay, $\varrho_0=10^{-3}$, $d_0=1\mathrm{m}$ is the reference distance,
and $\alpha$ is the pathloss exponent set to be $2.7$ \cite{J3,8}.
In the simulations, $d_{i,k}\sim\mathcal{U}(1,10)$ in meter, where $\mathcal{U}$ represents the uniform distribution, and $\mathbf{g}_{i,k},\mathbf{h}_{i,k}$ are generated according to $\mathcal{CN}(\mathbf{0},\varrho_{i,k}\mathbf{I})$.
It is assumed that power conversion efficiency $\eta=0.8$ \cite{R1}, and noise power $\sigma^2_r=\sigma^2_u=\sigma^2_z=-60\mathrm{dBm}$ \cite{8}.
The circuit power consumption is set to $p_c=10\mathrm{dBm}$ \cite{C1}, and $E_{i,k}\sim \mathcal{U}(9.5,13.0)$ in $\mathrm{dBm}$.
For the simulations of transmit power versus data-rate QoS, the same data-rate targets $\overline{R}_{i,k}=R$ are requested by all users \cite{16}, while for other simulations $\overline{R}_{i,k}\sim \mathcal{U}(0,2)$ in bps/Hz.
Each point in the figures is obtained by averaging over 100 simulation runs, with independent channels in each run.

\begin{figure}[!t]
\centering
\includegraphics[width=75mm]{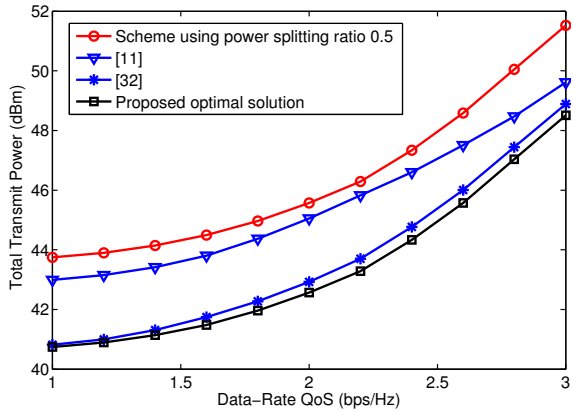}
\caption{Total transmit power versus data-rate QoS for the case of $K=1$ with $N=8$ when $E_{i,k}\sim \mathcal{U}(9.5,13.0)$ in dBm and noise power is $-60$dBm.}
\label{fig_sim}
\end{figure}

\begin{figure}[!t]
\centering
\includegraphics[width=75mm]{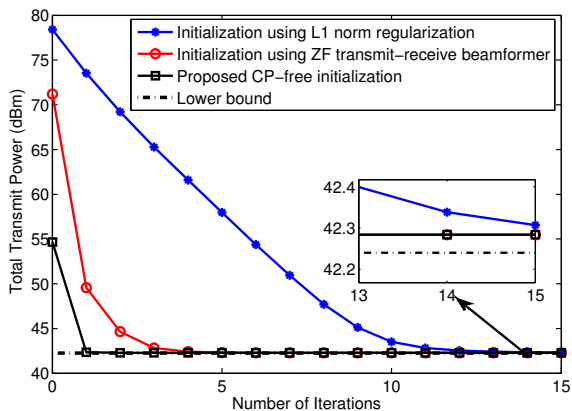}
\caption{Total transmit power versus number of iterations for the case of $K=3$ with $N=12$ when
$\overline{R}_{i,k}\sim \mathcal{U}(0,2)$ in bps/Hz, $E_{i,k}\sim \mathcal{U}(9.5,13.0)$ in dBm, and noise power is $-60$dBm.}
\label{fig_sim}
\end{figure}

First we consider the one pair case $K=1$ with $N=8$.
Here, four schemes are compared: the optimal solution, the SDP solution from \cite{R4}, the iterative solution from \cite{12}, and the solution with $\beta_1=\beta_2=0.5$.
As is shown in Fig. 2, the solution in Section IV-A of this paper achieves the lowest transmit power over a wide range of data-rate QoS.
Compared to the suboptimal solutions in \cite{R4} and \cite{12}, the proposed scheme has an average advantage of $0.3\mathrm{dB}$ and $2.1\mathrm{dB}$, respectively.
It is also observed that fixing power splitting ratio leads to more than $3\mathrm{dB}$ loss compared to the proposed optimal solution, indicating that the power splitting ratios should be jointly optimized with the beamformers.

Next we consider the case of three pairs of users $K=3$ with $N=12$.
To verify the convergence of the proposed iterative method in Section IV-B,
Fig. 3 shows the total transmit power versus number of iterations.
It can be seen that different initializations for the proposed iterative algorithm converge to the same value, which is within $0.2\mathrm{dB}$ from the lower bound.
Furthermore, among the initializations, the proposed CP-free initialization converges the fastest and stabilizes after $3$ iterations,
indicating that the complexity of the iterative algorithm can be moderate with the CP-free initialization.
For the $l_1$-norm regularization, due to its random picking of starting points, it requires more than $10$ iterations to converge.
For the ZF beamforming initialization, since it completely eliminates the inter-pair interference and forces the harvested energy at users to come from useful signals only, the initial transmit power is high.
But it also shows fast convergence, making it a compelling initialization when $N\geq2K-1$.
For the rest of this section, the proposed CP-free initialization will be used and the iterative algorithm stops after $3$ iterations.

\begin{figure}[!t]
\centering
\includegraphics[width=75mm]{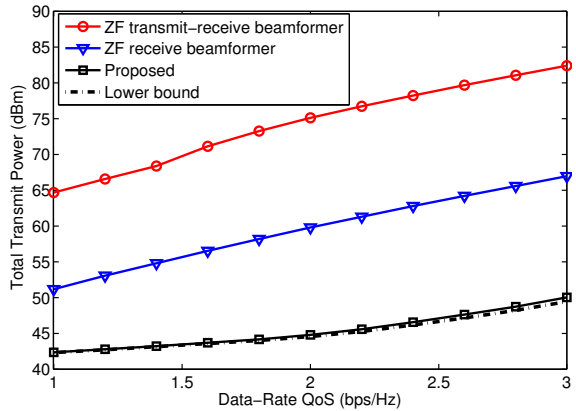}
\caption{Total transmit power versus data-rate QoS for the case of $K=3$ with $N=12$ when $E_{i,k}\sim \mathcal{U}(9.5,13.0)$ in dBm and noise power is $-60$dBm.}
\label{fig_sim}
\end{figure}

\begin{figure}[!t]
\centering
\includegraphics[width=75mm]{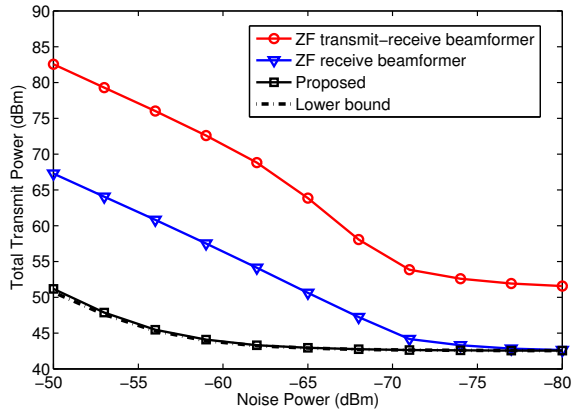}
\caption{Total transmit power versus noise power for the case of $K=3$ with $N=12$
when $\overline{R}_{i,k}\sim \mathcal{U}(0,2)$ in bps/Hz and $E_{i,k}\sim \mathcal{U}(9.5,13.0)$ in dBm.}
\label{fig_sim}
\end{figure}

To further demonstrate the performance of the proposed iterative algorithm for solving $\mathcal{P}2$,
Fig. 4 and Fig. 5 show the total transmit power versus the data-rate QoS and the noise power, respectively.
Apart from the ZF transmit-receive beamforming scheme, we also simulated the ZF receive beamforming scheme \cite[Section V]{11}, which optimizes the transmit beamformer in $\mathcal{P}2$ and provides more flexibility than the ZF transmit-receive beamforming scheme.
It can be observed from both figures that the proposed algorithm significantly outperforms other schemes and approaches the lower bound very tightly over a wide range of data-rate QoS and noise power.
This again verifies the excellent performance of the proposed method.
Interestingly, from Fig. 5, the performance of the suboptimal scheme using ZF receive beamformer also approaches the lower bound when the noise power is extremely small, e.g., smaller than $-75\mathrm{dBm}$.
As the scheme using ZF receive beamformer has a low complexity, it represents a promising solution in high SNR regime.

The aforesaid experiments examine the situation with $N\geq2K-1$.
Now we focus on the case of $N<2K-1$, in which the ZF schemes are invalid but the proposed algorithm is still applicable.
Specifically, the cases of $N=2K-2$ with $K=5$ and $K=7$ are simulated.
It can be seen from Fig. 6 that
although the gap between the proposed algorithm and the lower bound is larger due to the lower degree of freedom,
the proposed algorithm still achieves satisfying performance very close to the lower bound.
This reflects the high efficiency of the proposed method even in the case of low degree of freedom.

\begin{figure}[!t]
\centering
\includegraphics[width=75mm]{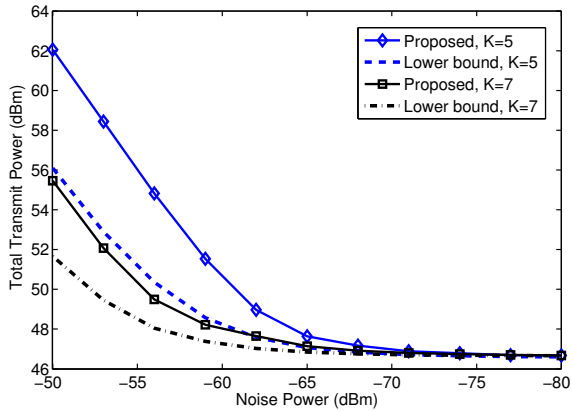}
\caption{Total transmit power versus noise power for the case of low degree of freedom when $\overline{R}_{i,k}\sim \mathcal{U}(0,2)$ in bps/Hz and $E_{i,k}\sim \mathcal{U}(9.5,13.0)$ in dBm.}
\label{fig_sim}
\end{figure}

\section{Conclusions}

This paper studied the total transmit power minimization in the multi-pair TWR system with harvest-then-transmit users.
Despite the challenge brought by the pairwise uplink-downlink coupling, it was proved that optimality could be achieved in the special case of one pair of users.
For the general case with multiple pairs of users, a convergence guaranteed iterative algorithm was proposed, and a lower bound on the performance of the optimal solution was derived.
Simulation results demonstrated that the proposed method outperforms other methods, and achieves good performance very close to the derived lower bound.

\appendices

\section{Compute-and-Forward with Lattice Codes}

\vspace{0.05in}
\noindent\underline{Generating $\mathbf{x}_{i,k}$ at the $(i,k)^{\mathrm{th}}$ user}
\vspace{0.05in}

Given $q_{i,k}$ and $|\mathbf{w}^H_k\mathbf{h}_{i,k}|^2$, we choose $K$ pairs of $M \times 1$ doubly nested lattice \cite{17} such that $\mathrm{\Lambda}_{1,k}\subseteq\mathrm{\Lambda}_{2,k}\subseteq\mathrm{\Lambda}_{k}$ with $k\in\mathcal{K}$, and the second moments
$\sigma^2(\Lambda_{i,k})=q_{i,k}|\mathbf{w}^H_{k}\mathbf{h}_{i,k}|^2$.
For user $(i,k)$, the source data is mapped into $\mathbf{c}_{i,k}\in L_{i,k}=\{\mathrm{\Lambda}_k~\mathrm{mod}~\mathrm{\Lambda}_{i,k}\}$ and the symbols to be transmitted are
\begin{eqnarray}\label{2}
&\mathbf{x}_{i,k}={(\mathbf{w}^H_k\mathbf{h}_{i,k})}^{-1}[(\mathbf{c}_{i,k}+\mathbf{d}_{i,k})\mathrm{mod}~\Lambda_{i,k}],
\end{eqnarray}
where $\mathbf{d}_{i,k}$ is a pre-generated random dither vector known to the $k^{\mathrm{th}}$ pair of users and relay.
Notice that the transmit power of $\mathbf{x}_{i,k}$ is $\frac{1}{M}\mathbb{E}[||\mathbf{x}_{i,k}||^2]=q_{i,k}$.

\vspace{0.05in}
\noindent\underline{Generating $\mathbf{s}_{k}$ at relay}
\vspace{0.05in}

Putting \eqref{2} into \eqref{1} and applying $\mathbf{w}_k$ to
extract the $k^{\mathrm{th}}$ layer signal at relay, we have
\begin{align}
&\mathbf{w}^H_{k}\mathbf{Y}=\Big[\mathop{\sum}_{i=1}^2(\mathbf{c}_{i,k}+\mathbf{d}_{i,k})\mathrm{mod}~\Lambda_{i,k}\Big]^T
\nonumber\\
&~~~~~~~~~~~
+\underbrace{\mathbf{w}^H_k\mathop{\sum}_{l\neq k}(\mathbf{h}_{1,l}\mathbf{x}^T_{1,l}+\mathbf{h}_{2,l}\mathbf{x}^T_{2,l})
+\mathbf{w}^H_k\mathbf{N}}_{\mathbf{m}^T_{k}},\nonumber
\end{align}
where the first term is the useful signal, the second term is inter-pair~interference, and the last term is noises with power $\frac{1}{M}\mathbb{E}[||\mathbf{w}^H_k\mathbf{N}||^2]=\sigma^2_r$.
Then, the relay computes \cite{17}:
\begin{align}
&\Big\{\kappa_k(\mathbf{w}^H_{k}\mathbf{Y})^T-\Sigma_{i=1}^2\mathbf{d}_{i,k}\Big\}\mathrm{mod}~\Lambda_{1,k}
\nonumber\\
&=\Big\{
\Big[\mathop{\sum}_{i=1}^2(\mathbf{c}_{i,k}+\mathbf{d}_{i,k})\mathrm{mod}~\Lambda_{i,k}\Big]
-\Sigma_{i=1}^2\mathbf{d}_{i,k}\nonumber\\
&~~~
\underbrace{-(1-\kappa_k)\Big[\mathop{\sum}_{i=1}^2(\mathbf{c}_{i,k}+\mathbf{d}_{i,k})\mathrm{mod}~\Lambda_{i,k}\Big]+\kappa_k\mathbf{m}_{k}}_{\tilde{\mathbf{m}}_k}
\Big\}\mathrm{mod}~\Lambda_{1,k}\nonumber\\
&=\Big\{
\underbrace{\mathop{\sum}_{i=1}^2\Big[\mathbf{c}_{i,k}-
Q_{i,k}(\mathbf{c}_{i,k}+\mathbf{d}_{i,k})\Big]}_{\mathbf{t}_k}
+\tilde{\mathbf{m}}_k\Big\}\mathrm{mod}~\Lambda_{1,k}
\label{tk}
\end{align}
where
$Q_{i,k}(\mathbf{c}_{i,k}+\mathbf{d}_{i,k})
=(\mathbf{c}_{i,k}+\mathbf{d}_{i,k})-(\mathbf{c}_{i,k}+\mathbf{d}_{i,k})\mathrm{mod}~\Lambda_{i,k}$
represents the nearest neighbor lattice of $\mathbf{c}_{i,k}+\mathbf{d}_{i,k}$ in $\Lambda_{i,k}$, and
\begin{align}
\kappa_k=\frac{q_{1,k}|\mathbf{w}^H_{k}\mathbf{h}_{1,k}|^2+q_{2,k}|\mathbf{w}^H_{k}\mathbf{h}_{2,k}|^2}
{\sum_{j=1}^2\sum_{l=1}^Kq_{j,l}|\mathbf{w}^H_k\mathbf{h}_{j,l}|^2+\sigma^2_r}.\nonumber
\end{align}
Since $\tilde{\mathbf{m}}_k$ is the suppressed interference (whitened by random dither, it can be viewed as Gaussian when $M\rightarrow\infty$) plus noise, the relay can recover $\mathbf{t}_k$ by applying lattice decoding to equation \eqref{tk}.
Then the relay maps the retrieved $\mathbf{t}_k$ to symbol $\mathbf{s}_{k}(\mathbf{t}_k)\in L_{r,k}$,
where $L_{r,k}$ is the lattice codebook at relay (not related to $L_{1,k}$ and $L_{2,k}$), with $\sigma^2(\Lambda_{r,k})=p_{k}$.

\section{Proof of Property 1}
To prove the property,
we first show that any stationary point of $\mathcal{P}1$ activates the first constraint for all users.
For simplicity, define the left hand side of the first constraint of $\mathcal{P}1$ as
\begin{align}
&\Delta_{i,k}:=\frac{q_{i,k}|\mathbf{w}^H_{k}\mathbf{h}_{i,k}|^2}
{\sum_{j=1}^2q_{j,k}|\mathbf{w}^H_{k}\mathbf{h}_{j,k}|^2}
\nonumber\\
&~~~~~~~~~~
+
\frac{q_{i,k}|\mathbf{w}^H_k\mathbf{h}_{i,k}|^2}{\sum_{j=1}^2\sum_{l\neq k}q_{j,l}|\mathbf{w}^H_k\mathbf{h}_{j,l}|^2+\sigma_r^2}\label{9}.
\end{align}
Then the Lagrangian of $\mathcal{P}1$ with respect to $\{q_{i,k}\}$ could be written as \cite{22}
\begin{align}
&\mathcal{L}=\sum_{i=1}^2\sum_{k=1}^K q_{i,k}
+\sum_{i=1}^2\sum_{k=1}^K\lambda_{i,k}(2^{2\overline{R}_{i,k}}-\Delta_{i,k})
\nonumber\\
&~~~~~
+\sum_{i=1}^2\sum_{k=1}^K\rho_{i,k}\Big(q_{i,k}-\eta(1-\beta_{i,k})(\sum_{l=1}^Kp_{l}|\mathbf{g}^H_{i,k}\mathbf{v}_{l}|^2+\sigma_u^2)
\nonumber\\
&~~~~~~~~~~~~~~~~~~~~~~~
+2p_c-2E_{i,k}\Big)
-\sum_{i=1}^2\sum_{k=1}^K\nu_{i,k}q_{i,k}
,\nonumber
\end{align}
with Lagrangian multipliers $\lambda_{i,k}\geq0,\rho_{i,k}\geq0,\nu_{i,k}\geq0$.
According to the KKT condition $\partial L/\partial q_{i,k}=0$ for all $i,k$, we have
\begin{align}
&1
-\sum_{j=1}^2\sum_{l=1}^K\lambda_{j,l}\frac{\partial\Delta_{j,l}}{\partial q_{i,k}}
+\rho_{i,k}
-\nu_{i,k}=0,~~\forall i,k.\label{kkt0}
\end{align}
Since $\overline{R}_{i,k}>0$, the user power satisfies $q_{i,k}\neq 0$.
From complementary slackness $\nu_{i,k}q_{i,k}=0$, equation $\nu_{i,k}=0$ holds.
Putting $\nu_{i,k}=0$ into \eqref{kkt0} gives
\begin{align}
&
\frac{\partial\Delta_{i,k}}{\partial q_{i,k}}
\lambda_{i,k}=
1
+\rho_{i,k}
-\sum_{(j,l)\neq(i,k)}\lambda_{j,l}\frac{\partial\Delta_{j,l}}{\partial q_{i,k}},~~\forall i,k. \label{kkt1}
\end{align}

To analyze the value of $\lambda_{i,k}$, we need to determine the signs of $\partial\Delta_{j,l}/\partial q_{i,k}$ for $(j,l)\neq(i,k)$.
More specifically, dividing the numerator and denominator of the first term in \eqref{9} by $q_{i,k}|\mathbf{w}^H_{k}\mathbf{h}_{i,k}|^2$, we have
\begin{align}
&\Delta_{i,k}:=1\Big/\Big(1+\frac{q_{3-i,k}|\mathbf{w}^H_{k}\mathbf{h}_{3-i,k}|^2}{q_{i,k}|\mathbf{w}^H_{k}\mathbf{h}_{i,k}|^2}\Big)
\nonumber\\
&~~~~~~~~~
+\frac{q_{i,k}|\mathbf{w}^H_k\mathbf{h}_{i,k}|^2}{\sum_{j=1}^2\sum_{l\neq k}q_{j,l}|\mathbf{w}^H_k\mathbf{h}_{j,l}|^2+\sigma_r^2}. \label{delta}
\end{align}
Clearly, for $\Delta_{j,l}$ with $(j,l)\neq(i,k)$,
when $q_{i,k}$ decreases,
$\Delta_{j,l}$ will monotonically increase because
$q_{i,k}$ only appears in the denominator of $\Delta_{j,l}$.
That is, $\partial\Delta_{j,l}/\partial q_{i,k}\leq0$ for $(j,l)\neq(i,k)$.
Putting $\partial\Delta_{j,l}/\partial q_{i,k}\leq0$ for $(j,l)\neq(i,k)$
into equation \eqref{kkt1},  we immediately have $\partial\Delta_{i,k}/\partial q_{i,k}\cdot
\lambda_{i,k}\geq 1,\forall i,k$, which leads to $\lambda_{i,k}\neq 0$.
Furthermore, due to the complementary slackness $\lambda_{i,k}(2^{2\overline{R}_{i,k}}-\Delta_{i,k})=0$,
the equality $\Delta_{i,k}=2^{2\overline{R}_{i,k}}$ must hold for all stationary points.

Based on the above result, we can treat the first inequality constraint as equality, which would not change the stationary point of $\mathcal{P}1$ \cite[pp. 307]{30}
\begin{align}\label{P2-1}
&\frac{q_{i,k}|\mathbf{w}^{H}_{k}\mathbf{h}_{i,k}|^2}{\sum_{j=1}^2q_{j,k}|\mathbf{w}^{H}_{k}\mathbf{h}_{j,k}|^2}
+\frac{q_{i,k}|\mathbf{w}^{H}_k\mathbf{h}_{i,k}|^2}{\sum_{j=1}^2\sum_{l\neq k}q_{j,l}|\mathbf{w}^{H}_k\mathbf{h}_{j,l}|^2+\sigma^2_r}
\nonumber\\
&
=q_{i,k}|\mathbf{w}^{H}_{k}\mathbf{h}_{i,k}|^2\Big(
\frac{1}{\sum_{j=1}^2q_{j,k}|\mathbf{w}^{H}_{k}\mathbf{h}_{j,k}|^2}
\nonumber\\
&~~~~~~~~~~~~~~~~~~~~
+\frac{1}{\sum_{j=1}^2\sum_{l\neq k}q_{j,l}|\mathbf{w}^{H}_k\mathbf{h}_{j,l}|^2+\sigma^2_r}
\Big)
\nonumber\\
&=2^{2\overline{R}_{i,k}},~~\forall i,k.
\end{align}
Now, it is clear from \eqref{P2-1} that the term inside the parenthesis remains the same for both users pertaining to the $k^{\mathrm{th}}$ user pair.
Thus, dividing both sides of \eqref{P2-1} with $i=1$ by that with $i=2$ for the $k^{\mathrm{th}}$ user pair, the term inside the parenthesis can be cancelled and we get
$
q_{1,k}|\mathbf{w}^{H}_k\mathbf{h}_{1,k}|^2/(q_{2,k}|\mathbf{w}^{H}_k\mathbf{h}_{2,k}|^2)
=2^{2\overline{R}_{1,k}}/2^{2\overline{R}_{2,k}},
$
which implies that
\begin{align}
\frac{q_{i,k}|\mathbf{w}^{H}_k\mathbf{h}_{1,k}|^2}{\sum_{j=1}^2q_{j,k}|\mathbf{w}^H_{k}\mathbf{h}_{j,k}|^2}
=\frac{2^{2\overline{R}_{i,k}}}{2^{2\overline{R}_{1,k}}+2^{2\overline{R}_{2,k}}}
,~~\forall i,k.\nonumber
\end{align}
This completes the proof.

\section{Reformulation of $\mathcal{P}1$}

Based on \textbf{Property 1} and after some manipulations, the following problem $\mathcal{P}1'$ is equivalent to $\mathcal{P}1$ in the sense that they have the same optimal solutions
\begin{subequations}
\begin{align}
&\mathcal{P}1':\mathop{\mathrm{min}}_{\substack{\{\mathbf{v}_k,\mathbf{w}_k,p_k,\\q_{i,k},\beta_{i,k}\}}}
~\sum_{k=1}^K p_{k}+\sum_{i=1}^2\sum_{k=1}^K q_{i,k} \nonumber \\
&\mathrm{s.t.}~~q_{i,k}|\mathbf{w}^H_k\mathbf{h}_{i,k}|^2
\nonumber\\
&~~~~~
\geq\alpha_{i,k}\Big(\sum_{j=1}^2\sum_{l\neq k}q_{j,l}|\mathbf{w}^H_k\mathbf{h}_{j,l}|^2+\sigma_r^2\Big),~~\forall i,k
\label{13a}
\\
&~~~~~~
\beta_{i,k}\Big(\frac{p_{k}|\mathbf{g}^H_{i,k}\mathbf{v}_k|^2}{\theta_{i,k}}
-\sum_{l\neq k}p_{l}|\mathbf{g}^H_{i,k}\mathbf{v}_{l}|^2-\sigma_u^2
\Big)
\nonumber\\
&~~~~~
\geq  \sigma_z^2,~~\forall i,k
\label{13b}
\\
&~~~~~~
\eta(1-\beta_{i,k})\Big(\sum_{l=1}^Kp_{l}|\mathbf{g}^H_{i,k}\mathbf{v}_{l}|^2)+\sigma_u^2\Big)
\nonumber\\
&~~~~~
\geq q_{i,k}+2p_c-2E_{i,k},~~\forall i,k
\label{13c}
\\
&~~~~~~q_{i,k}\geq0,~\beta_{i,k}\in (0,1],~~\forall i,k
\nonumber\\
&~~~~~~
||\mathbf{v}_k||=1,~||\mathbf{w}_k||=1,~p_{k}\geq0,~~\forall k.
\end{align}
\end{subequations}
Since the fractional-quadratic function is known to be convex, the coupling between $\mathbf{w}_k$ and $q_{i,k}$ in \eqref{13a} can be resolved by letting $q_{i,k}=1/\xi_{i,k}$.
This is always possible because $\alpha_{i,k}$ defined in \eqref{alpha} satisfies $\alpha_{i,k}> 0$ and then from \eqref{13a} $q_{i,k}\neq 0$ holds.
On the other hand, to linearize the quadratic terms of $\mathbf{v}_k$ in \eqref{13b} and \eqref{13c},
we introduce new variables $\mathbf{V}_k:=p_{k}\mathbf{v}_k\mathbf{v}^H_k\succeq 0$ with $\mathrm{Rank}(\mathbf{V}_k)=1$ for all $k\in\mathcal{K}$.
Applying the transformations presented above, the problem $\mathcal{P}1'$ is equivalent to the following problem $\mathcal{P}1''$:
\begin{subequations}
\begin{align}
&\mathcal{P}1'':\mathop{\mathrm{min}}_{\substack{\{\mathbf{V}_k,\mathbf{w}_k,\\\xi_{i,k},\beta_{i,k}\}}}
~\sum_{k=1}^K \mathrm{Tr}(\mathbf{V}_{k})+\sum_{i=1}^2\sum_{k=1}^K \frac{1}{\xi_{i,k}} \nonumber \\
&\mathrm{s.t.}~~
\frac{|\mathbf{w}^H_k\mathbf{h}_{i,k}|^2}{\xi_{i,k}}
\nonumber\\
&~~~~~
\geq\alpha_{i,k}\Big(\sum_{j=1}^2\sum_{l\neq k}\frac{|\mathbf{w}^H_k\mathbf{h}_{j,l}|^2}{\xi_{j,l}}+\sigma_r^2\Big),~~\forall i,k
\label{P4a}
\\
&~~~~~~
\beta_{i,k}\Big(\frac{\mathrm{Tr}(\mathbf{\Theta}_{i,k}\mathbf{V}_k)}{\theta_{i,k}}-\mathop{\sum}_{l\neq k}\mathrm{Tr}(\mathbf{\Theta}_{i,k}\mathbf{V}_{l})-\sigma_u^2\Big)
\nonumber\\
&~~~~~
\geq\sigma^2_z
,~~\forall i,k
\label{P4b}
\\
&~~~~~~
\eta(1-\beta_{i,k})(\sum_{l=1}^K\mathrm{Tr}(\mathbf{\Theta}_{i,k}\mathbf{V}_{l})+\sigma_u^2)
\nonumber\\
&~~~~~\geq
\frac{1}{\xi_{i,k}}+2p_c-2E_{i,k}
,~~\forall i,k
\label{P4c}
\\
&~~~~~~\xi_{i,k}>0,~\beta_{i,k}\in (0,1],~~\forall i,k,
\nonumber\\
&~~~~~~
\mathbf{V}_k\succeq 0,~\mathrm{Rank}(\mathbf{V}_k)=1,~||\mathbf{w}_k||\leq1,~~\forall k,\label{P4d}
\end{align}
\end{subequations}
where $\mathbf{\Theta}_{i,k}=\mathbf{g}_{i,k}\mathbf{g}^{H}_{i,k}$.
Notice that the constraint $||\mathbf{w}_k||=1$ in $\mathcal{P}1'$ is relaxed into
$||\mathbf{w}_k||\leq1$, which does not affect our problem since the optimal $\mathbf{w}^*_k$ of $\mathcal{P}1''$ always satisfies $||\mathbf{w}^*_k||=1$.

To deal with \eqref{P4a}, we introduce slack variables $I_{j,l,k}\geq|\mathbf{w}^H_k\mathbf{h}_{j,l}|^2/\xi_{j,l}$, and \eqref{P4a} becomes the constraints \eqref{P4-a} and \eqref{P4-a2} in $\mathcal{P}2$.
Furthermore, the constraint \eqref{P4b} can be rearranged as an LMI through Schur Complement Lemma \cite{22}, and this gives \eqref{P4-b} in $\mathcal{P}2$.
Now we focus on \eqref{P4c}, which is non-convex in its current form.
However, by introducing a slack variable $\mu_{i,k}$ such that
\begin{align}\label{diff}
&\eta(1-\beta_{i,k})(\sum_{l=1}^K\mathrm{Tr}(\mathbf{\Theta}_{i,k}\mathbf{V}_{l})+\sigma_u^2)\geq\mu^2_{i,k}
\nonumber\\
&~~~~~~~~~~~~~~~~~~~~~~~~~~~~~~
\geq
\frac{1}{\xi_{i,k}}+2p_c-2E_{i,k},
\end{align}
the first and second inequalities of \eqref{diff} can be cast as \eqref{P4-c} and \eqref{P4-mu} in $\mathcal{P}2$, respectively.
Finally, it is remarkable that the constraints $0<\beta_{i,k}\leq1$ and $\xi_{i,k}\geq0$ in \eqref{P4d} are implicitly incorporated by the LMI constraints \eqref{P4-a2}-\eqref{P4-c} in $\mathcal{P}2$, and thus they can be dropped without changing the problem.
After the above procedure, $\mathcal{P}1''$ is equivalently transformed into $\mathcal{P}2$.

\section{Proof of Property 2}
We first address the proof for $\mathbf{w}^*$.
Assume that $\mathbf{w}=a_{1}\mathbf{h}_{1}+a_{2}\mathbf{h}_{2}+a_{3}\mathbf{h}_{\bot}$ with $\mathbf{h}^H_{\bot}\mathbf{h}_{i}=0$.
Consider two solutions of $\mathbf{w}$ in $\mathcal{P}3$, i.e., $\widehat{a}_1\mathbf{h}_{1}+\widehat{a}_2\mathbf{h}_{2}+\widehat{a}_3\mathbf{h}_{\bot}$ with $\widehat{a}_3\neq0$ and $\epsilon\widehat{a}_1\mathbf{h}_{1}+\epsilon\widehat{a}_2\mathbf{h}_{2}$, while other variables are fixed.
Based on the norm constraint $||\widehat{a}_{1}\mathbf{h}_{1}+\widehat{a}_{2}\mathbf{h}_{2}||^2+||\widehat{a}_{3}\mathbf{h}_{\bot}||^2
=\epsilon^2||\widehat{a}_{1}\mathbf{h}_{1}+\widehat{a}_{2}\mathbf{h}_{2}||^2$, we must have $\epsilon>1$.
On the other hand, putting the two solutions $\widehat{a}_1\mathbf{h}_1+\widehat{a}_2\mathbf{h}_{2}+
\widehat{a}_3\mathbf{h}_{\bot}$ and $\epsilon \widehat{a}_{1}\mathbf{h}_1+\epsilon \widehat{a}_{2}\mathbf{h}_2$ into the objective function of problem $\mathcal{P}3$, we have
\begin{align}
&\frac{\alpha_{i}\sigma_r^2}{|(\widehat{a}_{1}\mathbf{h}_{1}+\widehat{a}_{2}\mathbf{h}_{2}+\widehat{a}_{3}\mathbf{h}_{\bot})^H\mathbf{h}_{i}|^2}
=\frac{\alpha_{i}\sigma_r^2}{|(\widehat{a}_{1}\mathbf{h}_{1}+\widehat{a}_{2}\mathbf{h}_{2})^H\mathbf{h}_{i}|^2}
\nonumber\\
&
>
\frac{\alpha_{i}\sigma_r^2}{\epsilon^2|(\widehat{a}_{1}\mathbf{h}_{1}+\widehat{a}_{2}\mathbf{h}_{2})^H\mathbf{h}_{i}|^2}.\nonumber
\end{align}
This indicates that for any solution with $a_{3}\neq 0$, we can always find another one satisfying \eqref{26c} while giving smaller objective value.
Therefore, we must have $a^*_{3}=0$.

Next we address the proof for $\mathbf{V}^*$.
Since $\mathrm{Rank}(\mathbf{V})=1$, we can assume
$\mathbf{V}=[b_{1}\mathbf{g}_{1}+b_{2}\mathbf{g}_{2}+b_{3}\mathbf{g}_{\bot}][b_{1}\mathbf{g}_{1}+b_{2}\mathbf{g}_{2}+b_{3}\mathbf{g}_{\bot}]^H$
with $\mathbf{g}^H_{\bot}\mathbf{g}_{i}=0$.
Then, we have $\mathrm{Tr}(\mathbf{V})=||b_{1}\mathbf{g}_{1}+b_{2}\mathbf{g}_{2}||^2+||b_{3}\mathbf{g}_{\bot}||^2$.
With a similar proof to $\mathbf{w}^*$, it can be easily shown that $b_{3}=0$ would minimize the objective function.
Furthermore, due to $\mathbf{\Theta}_i\mathbf{g}_{\bot}=\mathbf{g}^H_{\bot}\mathbf{\Theta}_i=\mathbf{0}$, we also have
$\mathrm{Tr}(\mathbf{\Theta}_i\mathbf{V})=\mathrm{Tr}(\mathbf{\Theta}_i[b_{1}\mathbf{g}_{1}+b_{2}\mathbf{g}_{2}][b_{1}\mathbf{g}_{1}+b_{2}\mathbf{g}_{2}]^H)$,
meaning that $b_3$ does not affect the constraints \eqref{26b} and \eqref{26bb}.
This completes the proof.

\section{Proof of Property 3}
To prove part (i), consider the following inequality
\begin{align}\label{property4-1}
\Big(\frac{\mathbf{w}_k}{\xi_{i,k}}-\frac{\mathbf{w}^{[n]}_k}{\xi^{[n]}_{i,k}}\Big)^H\mathbf{h}_{i,k}\mathbf{h}^H_{i,k}
\Big(\frac{\mathbf{w}_k}{\xi_{i,k}}-\frac{\mathbf{w}^{[n]}_k}{\xi^{[n]}_{i,k}}\Big)\geq 0.
\end{align}
This always holds due to $\mathbf{h}_{i,k}\mathbf{h}^H_{i,k}\succeq 0$.
Then from \eqref{property4-1} we further have
\begin{align}
&-2\mathrm{Re}\Big(\frac{(\mathbf{w}^{[n]}_k)^H\mathbf{h}_{i,k}\mathbf{h}^H_{i,k}\mathbf{w}_k}{\xi^{[n]}_{i,k}\xi_{i,k}}\Big)+
\frac{(\mathbf{w}^{[n]}_k)^H\mathbf{h}_{i,k}\mathbf{h}^H_{i,k}\mathbf{w}^{[n]}_k}{(\xi^{[n]}_{i,k})^2}
\nonumber\\
&
\geq
\frac{\mathbf{w}^H_k\mathbf{h}_{i,k}\mathbf{h}^H_{i,k}\mathbf{w}_k}{\xi^2_{i,k}}.
\end{align}
Multiplying $\xi_{i,k}$ on both sides of the above inequality immediately yields
$\tilde{\Phi}^{[n]}_{i,k}(\mathbf{w}_k,\xi_{i,k})\geq \Phi_{i,k}(\mathbf{w}_k,\xi_{i,k})$.

To prove part (ii), we substitute $\mathbf{w}^{[n]}_k$ and $\xi^{[n]}_{i,k}$ into the definition of $\tilde{\Phi}^{[n]}_{i,k}$ in \eqref{gapp}, which yields
\begin{align}
\tilde{\Phi}^{[n]}_{i,k}(\mathbf{w}^{[n]}_k,\xi^{[n]}_{i,k})
=
-\frac{(\mathbf{w}^{[n]}_k)^H\mathbf{h}_{i,k}\mathbf{h}^H_{i,k}\mathbf{w}^{[n]}_k}{\xi^{[n]}_{i,k}}
=
\Phi_{i,k}(\mathbf{w}^{[n]}_k,\xi^{[n]}_{i,k})
.
\end{align}

To prove part (iii), we first calculate the following derivatives:
\begin{align}
&\partial\tilde{\Phi}^{[n]}_{i,k}/\partial\mathbf{w}_k=
-\Big[\frac{(\mathbf{w}^{[n]}_k)^H\mathbf{h}_{i,k}\mathbf{h}^H_{i,k}}{\xi^{[n]}_{i,k}}\Big]^T,
\nonumber\\
&
\partial \Phi_{i,k}/\partial\mathbf{w}_k=
-\Big[\frac{\mathbf{w}^H_k\mathbf{h}_{i,k}\mathbf{h}^H_{i,k}}{\xi^{[n]}_{i,k}}\Big]^T, \nonumber
\\
&\partial\tilde{\Phi}^{[n]}_{i,k}/\partial\xi_{i,k}=
\frac{(\mathbf{w}^{[n]}_k)^H\mathbf{h}_{i,k}\mathbf{h}^H_{i,k}\mathbf{w}^{[n]}_k}{(\xi^{[n]}_{i,k})^2},
\nonumber\\
&
\partial \Phi_{i,k}/\partial\xi_{i,k}=
\frac{\mathbf{w}^H_k\mathbf{h}_{i,k}\mathbf{h}^H_{i,k}\mathbf{w}_k}{(\xi_{i,k})^2}.
\end{align}
Then by putting $\mathbf{w}_k=\mathbf{w}^{[n]}_k$ and $\xi_{i,k}=\xi^{[n]}_{i,k}$ into the above equations, the proof for part (iii) is completed.

\section{CP-Free Initialization for $\{\mathbf{w}^{[0]}_k,\xi^{[0]}_{i,k}\}$}
\vspace{0.05in}
\noindent\underline{Optimizing $\mathbf{w}_{k}$ with $\xi_{i,k}$ fixed}
\vspace{0.05in}

When $\{\xi_{i,k}\}$ is fixed, problem \eqref{C1} is equivalent to
\begin{align}\label{eq gk}
\mathop{\mathrm{max}}_{||\mathbf{w}_k||=1}~\mathop{\mathrm{min}}_{i=1,2}\frac{|\mathbf{w}^H_k\mathbf{h}_{i,k}|^2/\xi_{i,k}}
{\alpha_{i,k}(\sum_{j=1}^2\sum_{l\neq k}|\mathbf{w}^H_k\mathbf{h}_{j,l}|^2/\xi_{j,l}+\sigma_r^2)}.
\end{align}
Now incorporating the constraint $\mathbf{w}^H_k\mathbf{w}_k=1$ to the noise variance $\sigma^2_r$ and defining $\mathbf{J}_k=\sum_{j}\sum_{l\neq k}\mathbf{h}_{j,l}\mathbf{h}^H_{j,l}
/\xi_{j,l}+\sigma_r^2\mathbf{I}$, problem \eqref{eq gk} becomes
\begin{align}
\mathop{\mathrm{max}}_{\mathbf{w}_k}~\mathop{\mathrm{min}}_{i=1,2}~
\frac{\mathbf{w}^H_k\mathbf{h}_{i,k}\mathbf{h}^{H}_{i,k}\mathbf{w}_k}{\alpha_{i,k}\xi_{i,k}\mathbf{w}^H_k\mathbf{J}_{k}\mathbf{w}_k}. \label{23}
\end{align}
With a further change of variable $\mathbf{u}_k:=\mathbf{J}^{1/2}_{k}\mathbf{w}_k$, problem \eqref{23} becomes
\begin{align}
\mathop{\mathrm{max}}_{\mathbf{u}_k}~\mathop{\mathrm{min}}_{i=1,2}~
\frac{
\mathbf{u}^H_k(\mathbf{J}^{-1/2}_{k}\mathbf{h}_{i,k}\mathbf{h}^{H}_{i,k}\mathbf{J}^{-1/2}_{k})\mathbf{u}_k}{
\alpha_{i,k}\xi_{i,k}\mathbf{u}^H_k\mathbf{u}_k},
\end{align}
which can be equivalently written as
\begin{align}\label{eq 2}
\mathop{\mathrm{max}}_{\mathbf{u}_k}~\mathrm{min}
\Big(|\mathbf{e}^H_{1,k}\mathbf{u}_k|,|\mathbf{e}^H_{2,k}\mathbf{u}_k|\Big)
~~\mathrm{s.t.}~||\mathbf{u}_k||=1,
\end{align}
where $\mathbf{e}_{i,k}=(1/\sqrt{\alpha_{i,k}\xi_{i,k}})\mathbf{J}^{-1/2}_{k}\mathbf{h}_{i,k}$.
From \eqref{eq 2}, it is clear that the optimal
$\mathbf{u}^*_k\in\mathrm{span}\{\mathbf{e}_{1,k},\mathbf{e}_{2,k}\}$.
Furthermore, based on \cite[Lemma 5-6]{24}, the optimal $\mathbf{u}^*_k$ can be expressed by $0\leq a_k\leq1$ as
\begin{align}\label{uk}
\mathbf{u}^*_k(a_k)=\sqrt{a_k}\cdot\frac{\mathbf{e}_{1,k}}{||\mathbf{e}_{1,k}||}+\sqrt{1-a_{k}}\cdot\mathrm{e}^{\mathrm{j}\angle(\mathbf{e}^H_{2,k}\mathbf{e}_{1,k})}
\frac{\mathbf{e}_{b,k}}{||\mathbf{e}_{b,k}||},
\end{align}
where
$
\mathbf{e}_{b,k}=\mathbf{e}_{2,k}-
(\mathbf{e}^H_{1,k}\mathbf{e}_{2,k}/||\mathbf{e}_{1,k}||^2)\cdot\mathbf{e}_{1,k}
$.
Putting \eqref{uk} back to \eqref{eq 2},
\eqref{eq 2} is reduced into
\begin{align}\label{eq 3}
&\mathop{\mathrm{max}}_{a_k\in[0,1]}~\mathrm{\mathop{min}}\Big(
\sqrt{a_k}||\mathbf{e}_{1,k}||,\sqrt{a_k}\frac{|\mathbf{e}^H_{2,k}\mathbf{e}_{1,k}|}{||\mathbf{e}_{1,k}||}+\sqrt{1-a_k}||\mathbf{e}_{b,k}||
\Big).
\end{align}
Inside the min-function, it is clear that the term $\sqrt{a_k}||\mathbf{e}_{1,k}||$ is an increasing function of $a_k$,
and thus the maximum of $\sqrt{a_k}||\mathbf{e}_{1,k}||$ is obtained when $a_k=1$.
On the other hand, taking the derivative with respect to $a_k$,
it can be shown that $\sqrt{a_k}\frac{|\mathbf{e}^H_{2,k}\mathbf{e}_{1,k}|}{||\mathbf{e}_{1,k}||}+\sqrt{1-a_k}||\mathbf{e}_{b,k}||$
is an increasing function of $a_k$ when $a_k\leq\frac{|\mathbf{e}^H_{2,k}\mathbf{e}_{1,k}|^2}{||\mathbf{e}_{1,k}||^2||\mathbf{e}_{2,k}||^2}$,
and a decreasing function of $a_k$ when $a_k\geq\frac{|\mathbf{e}^H_{2,k}\mathbf{e}_{1,k}|^2}{||\mathbf{e}_{1,k}||^2||\mathbf{e}_{2,k}||^2}$.
Therefore, the maximum of $\sqrt{a_k}\frac{|\mathbf{e}^H_{2,k}\mathbf{e}_{1,k}|}{||\mathbf{e}_{1,k}||}+\sqrt{1-a_k}||\mathbf{e}_{b,k}||$ is obtained when $a_k=\frac{|\mathbf{e}^H_{2,k}\mathbf{e}_{1,k}|^2}{||\mathbf{e}_{1,k}||^2||\mathbf{e}_{2,k}||^2}$.

Based on the above analysis, it can be shown that the optimal $a^*_k$ to problem \eqref{eq 3}
must be $a^*_k\in\{1,\frac{|\mathbf{e}^H_{2,k}\mathbf{e}_{1,k}|^2}{||\mathbf{e}_{1,k}||^2||\mathbf{e}_{2,k}||^2},a^{\mathrm{int}}_{k}\}$, where $a^{\mathrm{int}}_{k}$ is the intersection point of
$\sqrt{a_k}||\mathbf{e}_{1,k}||$ and $\sqrt{a_k}\frac{|\mathbf{e}^H_{2,k}\mathbf{e}_{1,k}|}{||\mathbf{e}_{1,k}||}+\sqrt{1-a_k}||\mathbf{e}_{b,k}||$.
Therefore, the optimal $a^*_k$ can be chosen from the three points by comparing their objective values in \eqref{eq 3}.
Notice that to compute the intersection point $a^{\mathrm{int}}_{k}$, we have
\begin{align}\label{37}
&
||\mathbf{e}_{1,k}||\sqrt{a_k}=\frac{|\mathbf{e}^H_{2,k}\mathbf{e}_{1,k}|}{||\mathbf{e}_{1,k}||}\sqrt{a_k}+||\mathbf{e}_{b,k}||\sqrt{1-a_k},
\end{align}
which leads to
\begin{align}
  &a^{\mathrm{int}}_{k}=
  \frac{||\mathbf{e}_{b,k}||^2}{
  (||\mathbf{e}_{1,k}||-\frac{|\mathbf{e}^H_{2,k}\mathbf{e}_{1,k}|}{||\mathbf{e}_{1,k}||})^2+||\mathbf{e}_{b,k}||^2
  }
\nonumber.
 \end{align}
when $||\mathbf{e}_{1,k}||\geq|\mathbf{e}^H_{2,k}\mathbf{e}_{1,k}|/||\mathbf{e}_{1,k}||$;
otherwise the intersection point does not exist.

Putting the value of $a^*_k$ into \eqref{uk}, we obtain $\mathbf{u}^*_{k}$, and the optimal $\mathbf{w}^*_k$ to problem \eqref{eq gk} can be recovered as
$
\mathbf{w}^*_k=\mathbf{J}^{-1/2}_k\mathbf{u}^*_{k}\Big/||\mathbf{J}^{-1/2}_k\mathbf{u}^*_{k}||$.

\vspace{0.05in}
\noindent\underline{Optimizing $\xi_{i,k}$ with $\mathbf{w}_{k}$ fixed}
\vspace{0.05in}

When $\mathbf{w}_{k}$ is fixed, define the signal-term nonnegative matrix $\mathbf{D}\in \mathbb{R}^{2K\times 2K}_{+}$ as
\begin{align}
&\mathbf{D}=\mathrm{diag}\Big[
\frac{\alpha_{1,1}}{|\mathbf{w}^H_{1}\mathbf{h}_{1,1}|^2},
\frac{\alpha_{2,1}}{|\mathbf{w}^H_{1}\mathbf{h}_{2,1}|^2},
...,
\nonumber\\
&~~~~~~~~~~~~~~
\frac{\alpha_{1,K}}{|\mathbf{w}^H_{K}\mathbf{h}_{1,K}|^2},
\frac{\alpha_{2,K}}{|\mathbf{w}^H_{K}\mathbf{h}_{2,K}|^2}\Big],
\end{align}
and the cross-talk nonnegative matrix $\mathbf{R}\in \mathbb{R}^{2K\times 2K}_{+}$ with
the ${(2l-2+j,2k-2+i)}^{\mathrm{th}}$ element for $i,j=1,2$ and $l,k=1,...,K$
being
$|\mathbf{w}^H_{k}\mathbf{h}_{j,l}|^2$  when $l\neq k$, and $0$ otherwise.
Following the derivation of (17) in \cite{27}, the objective function of \eqref{C1} is maximized when $[1/\xi^*_{1,1},1/\xi^*_{1,2},...,1/\xi^*_{1,K},1/\xi^*_{2,K},1]^T$ is the dominant eigenvector of
\begin{align}
&
\left[
\begin{array}{cccc}
\mathbf{D}\mathbf{R}^H & \sigma_r\mathbf{D}\mathbf{1}_{2K}^H
\\
\mathbf{1}_{2K}^H\mathbf{D}\mathbf{R}^H/P & \sigma_r\mathbf{1}_{2K}^H\mathbf{D}\mathbf{1}_{2K}/P
\end{array}
\right].
\end{align}

\vspace{0.05in}
\noindent\underline{Generating $\{\mathbf{w}^{[0]}_k,\xi^{[0]}_{i,k}\}$}
\vspace{0.05in}

Starting from $\{\xi_{i,k}\}$ with $\sum_{i}\sum_{k}1/\xi_{i,k}=P$, $\mathbf{w}_k$ and $\xi_{i,k}$ are updated iteratively until a feasible $\{\mathbf{w}'_k,\xi'_{i,k}\}$ is found.
Then we
set $\{\mathbf{w}^{[0]}_k=\mathbf{w}'_k\}$, and by solving
\begin{align}
\frac{|(\mathbf{w}^{[0]}_k)^H\mathbf{h}_{i,k}|^2}{\xi_{i,k}}=\alpha_{i,k}\Big(\sum_{j=1}^2\sum_{l\neq k}\frac{|(\mathbf{w}^{[0]}_k)^H\mathbf{h}_{j,l}|^2}{\xi_{j,l}}+\sigma_r^2\Big),
\end{align}
we obtain $[1/\xi^{[0]}_{1,1},1/\xi^{[0]}_{1,2},...,1/\xi^{[0]}_{1,K},1/\xi^{[0]}_{2,K}]^T=
\sigma_r^2\Big(\mathbf{I}_{2K}-\mathbf{D}\mathbf{R}^H\Big)^{-1}\mathbf{D}\mathbf{1}_{2K}
$.

\section{Derivation of $p^*_{i,k}$ for $\mathcal{P}6$}
After putting $\mathbf{w}^*_{i,k},\mathbf{v}^*_{i,k}$ and $q^*_{i,k}$ into $\mathcal{P}6$, $\mathcal{P}6$ reduces into
\begin{align}
&\mathop{\mathrm{min}}_{\substack{\{p_{i,k}\geq0,\beta_{i,k}\in(0,1]\}}}
~\sum_{i=1}^2\sum_{k=1}^K p_{i,k} \nonumber\\
&\mathrm{s.t.}~~
\frac{\beta_{i,k}p_{i,k}N\varrho_{i,k}}
{\beta_{i,k}\sigma_u^2+\sigma_z^2}
\geq \theta_{i,k},~~
\forall i,k
\nonumber \\
&\eta(1-\beta_{i,k})\Big(p_{i,k}N\varrho_{i,k}+\sigma_u^2\Big)\geq \frac{\theta_{3-i,k}\sigma^2_r}{N\varrho_{i,k}}
+2p_c-2E_{i,k},
~~
\forall i,k,
\nonumber
\end{align}
where we have used $||\mathbf{g}_{i,k}||^2=N\varrho_{i,k}$.
Now we consider two cases for solving the above problem.
When $\frac{\theta_{3-i,k}\sigma^2_r}{N\varrho_{i,k}}
+2p_c-2E_{i,k}\leq0$, the second constraint is always satisfied, and the first constraint can be rewritten as
$
p_{i,k}\geq\frac{\theta_{i,k}}{N\varrho_{i,k}}(\sigma_u^2+\frac{\sigma_z^2}{\beta_{i,k}})
$.
Therefore, to minimize $p_{i,k}$, we need to maximize $\beta_{i,k}$.
Since $\beta_{i,k}\in(0,1]$, we have $\beta^*_{i,k}=1$.
Putting $\beta^*_{i,k}=1$ into $
p_{i,k}\geq\frac{\theta_{i,k}}{N\varrho_{i,k}}(\sigma_u^2+\frac{\sigma_z^2}{\beta_{i,k}})
$ and since $p_{i,k}$ is a variable to minimize,
we obtain the first line of \eqref{pik}.

On the other hand when $\frac{\theta_{3-i,k}\sigma^2_r}{N\varrho_{i,k}}+2p_c-2E_{i,k}>0$,
the two constraints can be rewritten as
$
p_{i,k}\geq\frac{\theta_{i,k}}{N\varrho_{i,k}}(\sigma_u^2+\frac{\sigma_z^2}{\beta_{i,k}})
$
and
$
p_{i,k}
\geq
\frac{\theta_{3-i,k}\sigma^2_r(N\varrho_{i,k})^{-1}+2p_c-2E_{i,k}}{\eta(1-\beta_{i,k})N\varrho_{i,k}}
-\frac{\sigma_u^2}{N\varrho_{i,k}}
$, respectively.
Combining the above two inequalities,
we have
\begin{align}\label{A2}
&
p_{i,k}
\geq\mathop{\mathrm{max}}\Big(\frac{\theta_{i,k}}{N\varrho_{i,k}}(\sigma_u^2+\frac{\sigma_z^2}{\beta_{i,k}}),
\nonumber\\
&~~~~~~~~~~
\frac{\theta_{3-i,k}\sigma^2_r(N\varrho_{i,k})^{-1}+2p_c-2E_{i,k}}{\eta(1-\beta_{i,k})N\varrho_{i,k}}
-\frac{\sigma_u^2}{N\varrho_{i,k}}
\Big).
\end{align}
Inside the max function of \eqref{A2}, the first term is a decreasing function of $\beta_{i,k}$ while the second term is an increasing function of $\beta_{i,k}$.
Therefore, the minimum of $p_{i,k}$ is obtained when
\begin{align}
&
\theta_{i,k}(\sigma_u^2+\frac{\sigma_z^2}{\beta_{i,k}})=
\frac{\theta_{3-i,k}\sigma^2_r(N\varrho_{i,k})^{-1}+2p_c-2E_{i,k}}{\eta(1-\beta_{i,k})}
-\sigma_u^2
\nonumber
\end{align}
which leads to
\begin{align}
&\beta^*_{i,k}=
\frac{
2\theta_{i,k}\sigma^2_z}
{B_{i,k}
+\sqrt{B^2_{i,k}+4\theta_{i,k}(\theta_{i,k}+1)\sigma^2_u\sigma^2_z}},
\nonumber
\end{align}
with $B_{i,k}=\theta_{i,k}\sigma^2_z-(\theta_{i,k}+1)\sigma^2_u+[\theta_{3-i,k}\sigma^2_r(N\varrho_{i,k})^{-1}+2p_c-2E_{i,k}]/\eta$.
Putting $\beta^*_{i,k}$ into \eqref{A2} and since $p_{i,k}$ is a variable to minimize, we obtain the second line of \eqref{pik}.

\end{document}